\numberwithin{equation}{section}
\newcommand{\bea}{\begin{eqnarray}}
\newcommand{\eea}{\end{eqnarray}}
\newcommand{\ba}{\begin{array}}
\newcommand{\ea}{\end{array}}
\newcommand{\edc}{\end{document}}
\newcommand{\bc}{\begin{center}}
\newcommand{\ec}{\end{center}}
\newcommand{\be}{\begin{equation}}
\newcommand{\ee}{\end{equation}}
\newcommand{\dsf}{\displaystyle\frac}
\def\cb{{\mathcal B}}
\def\cf{{\mathcal F}}
\def\cg{{\mathcal G}}
\def\gh{{\frak h}}
\def\bc{{\mathbb C}}
\def\bn{{\mathbb N}}
\def\bq{{\mathbb Q}}
\def\bz{{\mathbb Z}}
\def\a{\alpha}
\def\g{\gamma}  
\def\d{\delta}  \def\D{\Delta}
\def\e{\epsilon}
\def\l{\lambda} 
\def\k{\kappa}
\def\m{\mu}
\def\s{\sigma} 
\def\t{\theta}
\def\w{\omega} \def\Om{\Omega}
\def\h{{\mathbf{h}}}
\def\xb{{\mathbf{x}}}
\def\sb{{\mathbf{s}}}
\def\yb{{\mathbf{y}}}
\def\Bb{{\mathbf{B}}}
\newtheorem{thm}{Theorem}[section]
\newtheorem{lem}[thm]{Lemma}
\newtheorem{cor}[thm]{Corollary}
\theoremstyle{remark}
\newtheorem{rem}{Remark}[section]
\newtheorem{ex}{Example}[section]
\begin{document}
%\small

\title[phase transition for $p$-adic Potts model]
{On phase transition for one dimensional countable state $P$-adic
Potts model}

%\thanks{On leave from Department of Mechanics and Mathematics,
%National University of Uzbekistan, Tashkent, 700174, Uzbekistan}

\author{Farrukh Mukhamedov}
\address{Farrukh Mukhamedov\\
 Department of Computational \& Theoretical Sciences\\
Faculty of Science, International Islamic University Malaysia\\
P.O. Box, 141, 25710, Kuantan\\
Pahang, Malaysia} \email{{\tt far75m@yandex.ru}, {\tt
farrukh\_m@iiu.edu.my}}

\begin{abstract}
In the present paper we shall consider countable state $p$-adic
Potts model on $\bz_+$. A main aim is to establish the existence of
the phase transition for the model. In our study, we essentially use
one dimensionality of the model. To show it we reduce the problem,
to the investigation of an infinite-dimensional nonlinear equation.
We find a condition on weights to show that the derived equation has
two solutions, which yields the existence of the phase transition.
We prove that measures corresponding to  first and second solutions
are a $p$-adic Gibbs and generalized $p$-adic Gibbs measures,
respectively. Note that it turns out that the finding condition does
not depend on values of the prime $p$, and therefore, an analogous
fact is not true when the number of spins is finite. Note that, in
the usual real case, if one considers one dimensional
translation-invariant model with nearest neighbor interaction, then
such a model does not exhibit a phase transition. Nevertheless, we
should stress that in our model there does not occur the strong
phase transition, this means that there is only one $p$-adic Gibbs
measure. Here we may see some similarity with the real case.
Besides, we prove that the $p$-adic Gibbs measure is bounded, and
the generalized one is not bounded.

\vskip 0.3cm \noindent {\it
Mathematics Subject Classification}: 46S10, 82B26, 12J12, 39A70, 47H10, 60K35.\\
{\it Key words}: $p$-adic numbers; countable state; Potts model;
$p$-adic Gibbs measure; weight;phase transition.
\end{abstract}

\maketitle

\section{introduction}

Due to the assumption that $p$-adic numbers provide a more exact and
more adequate description of microworld phenomena, starting the
1980s, various models described in the language of $p$-adic analysis
have been actively studied
\cite{ADFV},\cite{FO},\cite{MP},\cite{V1}. The well-known studies in
this area are primarily devoted to investigating quantum mechanics
models using equations of mathematical physics \cite{ADV,V2,VVZ}.
Furthermore, numerous applications of the $p$-adic analysis to
mathematical physics have been proposed in
\cite{ABK},\cite{Kh1},\cite{Kh2}. One of the first applications of
$p$-adic numbers in quantum physics appeared in the framework of
quantum logic in \cite{BC}. This model is especially interesting for
us because it could not be described by using conventional real
valued probability. Besides, it is also known \cite{Kh2,MP,Ro,VVZ}
that a number of $p$-adic models in physics cannot be described
using ordinary Kolmogorov's probability theory. New probability
models, namely $p$-adic values ones were investigated in
\cite{BD},\cite{K3},\cite{KhN}. After that in \cite{KYR} an abstract
$p$-adic probability theory was developed by means of the theory of
non-Archimedean measures \cite{Ro},\cite{kas1}. Using that measure
theory in \cite{KL},\cite{Lu} the theory of stochastic processes
with values in $p$-adic and more general non-Archimedean fields
having probability distributions with non-Archimedean values has
been developed. In particular, a non-Archimedean analog of the
Kolmogorov theorem was proven (see also \cite{GMR}). Such a result
allows us to construct wide classes of stochastic processes using
finite dimensional probability distributions\footnote{We point out
that stochastic processes on the field $\bq_p$ of $p$-adic numbers
with values of real numbers have been studied by many authors, for
example, \cite{AK,AZ1,AZ2,DF,Koc,Y}. In those investigations wide
classes of Markov processes on $\bq_p$ were constructed and studied.
In our case the situation is different, since probability measures
take their values in $\bq_p$. This leads our investigation to some
difficulties. For example, there is no information about the
compactness of $p$-adic values probability measures. }. Therefore,
this result gives us a possibility to develop the theory of
statistical mechanics in the context of the $p$-adic theory, since
it lies on the basis of the theory of probability and stochastic
processes. First steps in this theory have been started in
\cite{KM,MR1,MR2}. Note that one of the central problems of such a
theory is the study of infinite-volume Gibbs measures corresponding
to a given Hamiltonian, and a description of the set of such
measures. In most cases such an analysis depend on a specific
properties of Hamiltonian, and complete description is often a
difficult problem. This problem, in particular, relates to a phase
transition of the model (see \cite{G}).

In \cite{KK1,KK2} a notion of ultrametric Markovianity, which
describes independence of contributions to random field from
different ultrametric balls, has been introduced, and shows that
Gaussian random fields on general ultrametric spaces (which were
related with hierarchical trees), which were defined as a solution
of pseudodifferential stochastic equation (see also \cite{KaKo}),
satisfies the Markovianity. In \cite{KE,K3} Gaussian $p$-adic valued
measure was investigated, and showed that such a measure is not
bounded. This phenomena shows the difference between real and
$p$-adic valued probability theories. Some applications of the
results to replica matrices, related to general ultrametric spaces
have been investigated in \cite{KK3}.

The purpose of this paper is devoted to the development of $p$-adic
statistical mechanics in  $p$-adic probability theory framework.
Namely, we study one-dimensional countable state of nearest-neighbor
Potts models (see \cite{Ga,W}) over $p$-adic filed. We are
especially interested in the existence of phase transition for the
mentioned model. Here by the phase transition we mean the existence
of two different generalized $p$-adic Gibbs measures associated with
the model. Note that such measures present more natural concrete
examples of $p$-adic Markov processes (see \cite{KL}, for
definitions). It is worth to mention that when the number of states
of the model is finite, say $q$, then the corresponding $p$-adic
$q$-state Potts models have been studied in
\cite{MR1,MR2}\footnote{The classical (real value) counterparts of
such models were considered in \cite{W}}. It was established that a
strong phase transition occurs if $q$ is divisible by $p$. Here the
strong phase transition means the existence of two different
$p$-adic Gibbs measures. This shows that the transition depends on
the number of spins $q$ \footnote{To establish such results we
investigated $p$-adic dynamical systems associated with the model.
Note that first investigations of non-Archimedean dynamical systems
were appeared in \cite{HY} (see also
\cite{AKh,KhN,FL1,Sil1,TVW,Wo})}. Therefore, it is interesting to
know the situation in the setting with countable states. In
\cite{KMM} (see also \cite{M}) first steps to investigation of such
a countable state $p$-adic Potts model on Cayley tree have been
studied. We provided a sufficient condition for the uniqueness of
$p$-adic Gibbs measures. Note that such a condition does not depend
on the value of $p$.

In the present paper we shall consider countable state $p$-adic
Potts model on $\bz_+$. A main aim is to establish the existence of
the phase transition for the model. In our study, we essentially use
one dimensionality of the model. To show it we reduce the problem,
to the investigation of an infinite-dimensional nonlinear equation.
We will show that the derived equation has two solutions, which
yields the existence of the phase transition. Note that, in the
usual real case, if one considers one dimensional
translation-invariant model with nearest neighbor interaction, then
such a model does not exhibit a phase transition. But in our
setting, we are able to produce a model which is
translation-invariant and has nearest neighbor interactions, and for
such a model we shall prove the existence of the phase transition.
Nevertheless, we should stress that in our model there does not
occur the strong phase transition, this means that there is only one
$p$-adic Gibbs measure. Here we may see some similarity with the
real case.

Let us briefly describe the paper. After preliminaries, in section 3
we introduce the model, and define generalized $p$-adic Gibbs
measure and $p$-adic Gibbs measure, respectively. Here the provided
construction of such measures which depends on a weight $\l$.  The
goal of this investigation is to give a sufficient condition for the
existence of two such measures. Note that in comparison to a real
case, in a $p$-adic setting, \`{a} priori the existence of such kind
of measures for the model is not known, since there is not much
information on topological properties of the set of all $p$-adic
measures defined even on compact spaces. However, in the real case,
there is the so called the Dobrushin's Theorem \cite{Dob1,Dob2,G}
which gives a sufficient condition for the existence of the Gibbs
measure for a large class of Hamiltonians. Using the $p$-adic analog
of Kolmogorov's extension Theorem \cite{KL}, an investigation of the
defined measures is reduced to the examination of an
infinite-dimensional nonlinear recursion equation. In next Section
4, we associate a nonlinear operator on the Banach space $c_0$ to
the derived recursion equation. We provide a sufficient condition on
weight $\l$, which ensures the existence of two fixed points of the
nonlinear operator. This implies the existence of the phase
transition for the model. It turns out that the finding condition
does not depend on values of the prime $p$, and therefore, an
analogous fact is not true when the number of spins is finite.
Moreover, we show that the found fixed points define a $p$-adic
Gibbs and s generalized $p$-adic Gibbs measures, respectively.
Besides, we prove that the $p$-adic Gibbs measure is bounded, and
the generalized one is not bounded.

\section{Preliminaries}

Throughout the paper $p$ will be a fixed prime number greater than
3, i.e. $p\geq 3$. Every rational number $x\neq 0$ can be
represented in the form $x=p^r\dsf{n}{m}$, where $r,n\in\bz$, $m$ is
a positive integer, $(p,n)=1$, $(p,m)=1$. The $p$-adic norm of $x$
is given by
$$
|x|_p=\left\{ \ba{ll}
p^{-r} & \ \textrm{ for $x\neq 0$}\\
0 &\ \textrm{ for $x=0$}.\\
\ea \right.
$$

It satisfies the following strong triangle inequality
$$
|x+y|_p\leq\max\{|x|_p,|y|_p\},
$$
this is a non-Archimedean norm.

The completion of the field of rational numbers $\bq$  with
respect to the $p$-adic norm is called  {\it $p$-adic field} and
it is denoted by $\bq_p$.

According to non-Archimedeanity of the norm the following statement
holds true (see \cite{Ko}).

\begin{lem}\label{ser} Let $\{x_n\}$ be a sequence in $\bq_p$.
Then $\sum\limits_{k=1}^\infty x_k$ converges iff $x_n\to 0$.
\end{lem}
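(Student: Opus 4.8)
The plan is to prove the two implications separately, with essentially all of the non-trivial content residing in the sufficiency direction, where the strong triangle inequality is doing the work.

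For necessity, I would argue exactly as in the classical real case, since this direction uses nothing about non-Archimedeanity. If $\sum_{k=1}^\infty x_k$ converges, then by definition its sequence of partial sums $S_n=\sum_{k=1}^n x_k$ converges, and hence is Cauchy. Writing $x_n=S_n-S_{n-1}$ and letting $n\to\infty$ forces $|x_n|_p\to 0$, that is, $x_n\to 0$ in $\bq_p$.

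For sufficiency, suppose $x_n\to 0$. Since $\bq_p$ is complete (being the completion of $\bq$ with respect to $|\cdot|_p$), it suffices to show that $\{S_n\}$ is Cauchy. The key estimate is that for $m>n$, repeated application of the strong triangle inequality gives
$$|S_m-S_n|_p=\Big|\sum_{k=n+1}^m x_k\Big|_p\le\max_{n+1\le k\le m}|x_k|_p.$$
Given $\varepsilon>0$, choose $N$ so that $|x_k|_p<\varepsilon$ for every $k>N$; then for all $m>n\ge N$ the right-hand side is strictly less than $\varepsilon$, so $\{S_n\}$ is Cauchy and therefore convergent, which is precisely the convergence of $\sum_{k=1}^\infty x_k$.

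The only genuine obstacle, and in fact the whole point of the statement, is this tail estimate. In an Archimedean setting one only obtains $|S_m-S_n|_p\le\sum_{k=n+1}^m|x_k|_p$, and the hypothesis $x_n\to 0$ is insufficient to control such a sum (as the harmonic series illustrates). It is exactly the maximum-bound supplied by the strong triangle inequality that upgrades $x_n\to 0$ from a merely necessary condition into a sufficient one, and I would emphasize this as the crux of the argument.
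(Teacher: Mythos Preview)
Your proof is correct and is the standard argument for this well-known fact. The paper does not actually supply its own proof of this lemma; it simply cites it from Koblitz \cite{Ko}, so there is nothing further to compare.
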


Note that any $p$-adic number $x\neq 0$ can be uniquely
represented in the form \begin{equation}\label{can}
x=p^{\g(x)}(x_0+x_1p+x_2p^2+\cdots),
\end{equation}
where $\g=\g(x)\in\bz$ and $x_j$ are integers, $0\leq x_j\leq p-1$,
$x_0>0$, $j=0,1,2,...$ (see more detail \cite{Ko}). In this case
$|x|_p=p^{-\g(x)}$.

We recall that an integer $a\in \bz$ is called {\it a quadratic
residue modulo $p$} if the equation $x^2\equiv a(\textrm{mod
$p$})$ has a solution $x\in \bz$.

\begin{lem}\label{sqrt}\cite{VVZ} In order that the equation
$$
x^2=a, \ \ 0\neq a=p^{\g(a)}(a_0+a_1p+...), \ \ 0\leq a_j\leq p-1,
\ a_0>0
$$
has a solution $x\in \bq_p$, it is necessary and sufficient that
the following conditions are fulfilled:
\begin{itemize}
\item[(i)] $\g(a)$ is even;\\
\item[(ii)] $a_0$ is a quadratic residue modulo $p$ if $p\neq 2$,
$a_1=a_2=0$ if $p=2$.
\end{itemize}
\end{lem}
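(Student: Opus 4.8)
The plan is to reduce the equation to the case of a $p$-adic unit and then invoke Hensel's lemma. For necessity, suppose $x\in\bq_p$ satisfies $x^2=a$ and write $x=p^{\g(x)}u$ with $|u|_p=1$. Squaring gives $a=p^{2\g(x)}u^2$, and comparing with the canonical expansion $a=p^{\g(a)}(a_0+a_1p+\cdots)$ forces $\g(a)=2\g(x)$; this yields condition (i). Writing $u=x_0+x_1p+\cdots$ with $x_0\neq 0$ and reducing the unit part modulo $p$, one gets $a_0\equiv x_0^2\pmod p$, so $a_0$ is a quadratic residue. For $p=2$, a direct congruence computation modulo $8$ (every odd square is $\equiv 1\pmod 8$) forces $a_1=a_2=0$, giving condition (ii).

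For sufficiency, condition (i) lets me write $\g(a)=2m$ and $a=p^{2m}a'$ with $a'=a_0+a_1p+\cdots$ a unit. Since $x=p^{m}y$ solves $x^2=a$ as soon as $y^2=a'$, the whole problem reduces to finding a square root of the unit $a'$, and I would apply Hensel's lemma to $f(y)=y^2-a'$. When $p\neq 2$, condition (ii) supplies an integer $y_0$ with $y_0^2\equiv a_0\pmod p$, where $y_0\not\equiv 0\pmod p$ since $a_0>0$ is a nonzero residue. Then $|f(y_0)|_p\leq p^{-1}$ while $|f'(y_0)|_p=|2y_0|_p=1$, so $|f(y_0)|_p<|f'(y_0)|_p^2$ and the standard Hensel lemma lifts $y_0$ to a genuine root $y\in\bq_p$.

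The delicate point is $p=2$, where $f'(y)=2y$ always has norm at most $1/2$, so the naive Hensel lemma never applies and one must use its refined form: a root exists whenever $|f(y_0)|_p<|f'(y_0)|_p^2$. Here condition (ii) ($a_1=a_2=0$) says precisely $a'\equiv 1\pmod 8$. Taking $y_0=1$ gives $|f(1)|_2=|1-a'|_2\leq 2^{-3}$ while $|f'(1)|_2^2=|2|_2^2=2^{-2}$, so again the refined hypothesis $|f(1)|_2<|f'(1)|_2^2$ holds and produces the root.

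The main obstacle is thus organizing the $p=2$ analysis so that the congruence $a'\equiv 1\pmod 8$ is seen to be both necessary and exactly strong enough to meet the refined Hensel hypothesis; the odd-$p$ case, which is the one actually used in the sequel since the paper fixes $p\geq 3$, is routine once the reduction to a unit is in place.
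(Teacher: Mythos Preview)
Your argument is correct and is the standard proof of this classical fact. Note, however, that the paper does not supply its own proof of this lemma: it is simply quoted from \cite{VVZ} and stated without proof, so there is nothing in the paper to compare your approach against. Your write-up would serve perfectly well as the omitted justification, and since the paper works under the standing assumption $p\geq 3$, only the easier odd-$p$ branch of your argument is actually needed downstream.
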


Denote $\bz_p=\{x\in\bq_p: \ \ |x|_p\leq 1\}$. Elements of the set
$\bz_p$ are called {\it $p$-adic integers}.

\begin{lem}[Hensel's Lemma]\label{HL}\cite{Ko} Let $P(x)$ be polynomial whose the
coefficients are $p$-adic integers. Let $a_0\in\bz_p$ be a $p$-adic
integer such that $P(a_0)\equiv 0 (mod\  p)$ and $P'(a_0)\neq 0
(mod\ p)$. There exists a unique $p$-adic integer $x_0\in\bz_p$ such
that $P(x_0)=0$ and $|x_0-a_0|_p\leq 1/p$.
\end{lem}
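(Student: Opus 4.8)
The plan is to prove the statement by the $p$-adic Newton method, realizing the root as the limit of a quadratically convergent sequence of successive approximations. First I would translate the two hypotheses into norm estimates: $P(a_0)\equiv 0\ (\mathrm{mod}\ p)$ means $|P(a_0)|_p\leq 1/p$, while $P'(a_0)\not\equiv 0\ (\mathrm{mod}\ p)$ together with $a_0\in\bz_p$ forces $|P'(a_0)|_p=1$, so that $P'(a_0)$ is a $p$-adic unit. The engine of the argument is the finite Taylor expansion of a polynomial: for $x,h\in\bz_p$ one has
$$P(x+h)=P(x)+P'(x)h+h^2R(x,h),$$
where $R(x,h)\in\bz_p$ because the Taylor coefficients $P^{(k)}(x)/k!$ are integer combinations (via binomial coefficients) of the coefficients of $P$ and hence lie in $\bz_p$. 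This structural fact is the one point I would verify carefully at the outset, since the entire quadratic estimate rests on it.

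Next I would define the iteration $a_{n+1}=a_n-P(a_n)/P'(a_n)$ and establish by induction three facts simultaneously: $a_n\in\bz_p$, $|P'(a_n)|_p=1$, and $|P(a_n)|_p\leq p^{-2^n}$. The increment satisfies $|a_{n+1}-a_n|_p=|P(a_n)|_p/|P'(a_n)|_p=|P(a_n)|_p\leq 1/p$, so $a_{n+1}\in\bz_p$; applying the Taylor identity with $x=a_n$ and $h=-P(a_n)/P'(a_n)$ cancels the first two terms and leaves $P(a_{n+1})=h^2R(a_n,h)$, giving the quadratic bound $|P(a_{n+1})|_p\leq|P(a_n)|_p^2\leq p^{-2^{n+1}}$. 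Moreover $|P'(a_{n+1})-P'(a_n)|_p\leq|a_{n+1}-a_n|_p\leq 1/p<1$, so by the isosceles property of the non-Archimedean norm $|P'(a_{n+1})|_p=|P'(a_n)|_p=1$, which closes the induction.

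Since $|a_{n+1}-a_n|_p\to 0$, Lemma \ref{ser} shows that the telescoping series $\sum_n(a_{n+1}-a_n)$ converges, so $a_n\to x_0$ for some $x_0\in\bz_p$. Continuity of $P$ gives $P(x_0)=\lim_n P(a_n)=0$, and the strong triangle inequality yields $|x_0-a_0|_p\leq\max_n|a_{n+1}-a_n|_p\leq 1/p$; by the same reasoning $|P'(x_0)|_p=1$. For uniqueness I would take a second root $y_0$ with $|y_0-a_0|_p\leq 1/p$, write $y_0=x_0+\delta$ with $|\delta|_p\leq 1/p$, and expand $0=P(y_0)=P'(x_0)\delta+\delta^2R(x_0,\delta)$. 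If $\delta\neq 0$, then $|P'(x_0)\delta|_p=|\delta|_p$ strictly exceeds $|\delta^2R(x_0,\delta)|_p\leq|\delta|_p^2$, so the right-hand side has norm $|\delta|_p\neq 0$, a contradiction; hence $\delta=0$ and $y_0=x_0$. The main obstacle is the bookkeeping of the first paragraph, namely confirming that the remainder $R(x,h)$ genuinely has $p$-adic integer coefficients, since once that is in place both the quadratic convergence and the uniqueness argument are driven purely by the ultrametric inequality.
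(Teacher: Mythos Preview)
Your proof is correct and is the standard Newton--Raphson argument for Hensel's Lemma. The paper itself does not supply a proof of this statement at all: it is quoted as Lemma~\ref{HL} with a citation to Koblitz~\cite{Ko} and used as a black box. So there is nothing in the paper to compare against; your argument simply fills in what the paper imports from the literature, and it does so in the way Koblitz (and most standard references) do, via the quadratically convergent iteration $a_{n+1}=a_n-P(a_n)/P'(a_n)$ together with the integrality of the divided Taylor coefficients $P^{(k)}(x)/k!$.
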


 Given $a\in \bq_p$ and  $r>0$ put
$$
B(a,r)=\{x\in \bq_p : |x-a|_p< r\}, \ \ S(a,r)=\{x\in \bq_p :
|x-a|_p=r\}.
$$
The {\it $p$-adic logarithm} is defined by the series
$$
\log_p(x)=\log_p(1+(x-1))=\sum_{n=1}^{\infty}(-1)^{n+1}\dsf{(x-1)^n}{n},
$$
which converges for $x\in B(1,1)$; the {\it $p$-adic exponential}
is defined by
$$
\exp_p(x)=\sum_{n=0}^{\infty}\dsf{x^n}{n!},
$$
which converges for $x\in B(0,p^{-1/(p-1)})$.

\begin{lem}\label{exp}\cite{Ko} Let $x\in B(0,p^{-1/(p-1)})$
then we have
\begin{eqnarray}\label{exp}
&& |\exp_p(x)|_p=1,\ \ \ |\exp_p(x)-1|_p=|x|_p,
\ \ \ |\log_p(1+x)|_p=|x|_p \\
\label{el} && \log_p(\exp_p(x))=x, \ \ \exp_p(\log_p(1+x))=1+x.
\end{eqnarray}
\end{lem}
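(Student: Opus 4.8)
The plan is to reduce both the convergence statements and the norm identities to two valuation estimates together with the isosceles property of the non-Archimedean norm, and then to obtain the functional identities in \eqref{el} from their formal-power-series counterparts.

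First I would note that $|x|_p<p^{-1/(p-1)}$, with $\g(x)\in\bz$ and $p\ge 3$, forces $\g(x)\ge 1$; so write $c=\g(x)\ge 1>1/(p-1)$. Using Legendre's formula $\g(n!)=(n-s_p(n))/(p-1)$, where $s_p(n)$ denotes the sum of the base-$p$ digits of $n$, one obtains
\[
\g\!\left(\frac{x^n}{n!}\right)=n\left(c-\frac{1}{p-1}\right)+\frac{s_p(n)}{p-1}.
\]
Since $c-1/(p-1)>0$ the right-hand side tends to $+\infty$, so Lemma \ref{ser} yields convergence of $\exp_p(x)$. The same expression shows that for $n\ge 2$ the valuation of the $n$-th term strictly exceeds $c=\g(x)$, whereas the $n=1$ term equals $x$; hence $\exp_p(x)-1=\sum_{n\ge 1}x^n/n!$ possesses a uniquely dominant term, and the strong triangle inequality gives $|\exp_p(x)-1|_p=|x|_p$. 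Then $|\exp_p(x)|_p=|1+(\exp_p(x)-1)|_p=1$ because $|x|_p<1$. This establishes the first two identities.

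For the logarithm I would proceed in the same spirit, replacing $\g(n!)$ by the elementary bound $\g(n)\le\log_p n$ (which follows from $p^{\g(n)}\mid n$). The $n$-th term of $\log_p(1+x)$ has valuation $cn-\g(n)$, which tends to $+\infty$, and for $n\ge 2$ it exceeds $c$, since $c(n-1)\ge n-1>\log_p n\ge\g(n)$ whenever $p\ge 3$. Thus the $n=1$ term $x$ is again uniquely dominant, giving $|\log_p(1+x)|_p=|x|_p$.

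Finally, for the two functional identities I would start from the purely algebraic facts that $\log(\exp X)=X$ and $\exp(\log(1+X))=1+X$ hold as identities of formal power series over $\bq$. The norm identities just proven guarantee that $\exp_p(x)\in B(1,1)$ and $\log_p(1+x)\in B(0,p^{-1/(p-1)})$, so both compositions are well defined. The main obstacle is to legitimize the substitution $p$-adically: expanding either composition produces a double-indexed family, and one must check that its general term tends to $0$, which is exactly the condition (the double-series form of Lemma \ref{ser}) that permits arbitrary rearrangement in a non-Archimedean field. Carrying out this uniform valuation bookkeeping over both indices is the only delicate estimate; once it is in place, the $p$-adic sums coincide with the formal-power-series values and both identities follow.
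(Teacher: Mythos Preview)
The paper does not supply a proof of this lemma; it is quoted directly from Koblitz with a citation and no argument. There is therefore nothing in the paper to compare your attempt against.

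That said, your proof is correct and is essentially the standard one found in the cited reference. The reduction to $c=\g(x)\ge 1$ via the standing hypothesis $p\ge 3$ is valid (for $p\ge 3$ one has $p^{-1/(p-1)}\le 1$ and the valuation is an integer), Legendre's formula gives the exact valuation of the general term of $\exp_p$, and the dominance of the $n=1$ term for $n\ge 2$ follows as you wrote. The analogous estimate for $\log_p$ using $\g(n)\le\log_p n$ and $n-1>\log_p n$ for $n\ge 2$, $p\ge 3$ is also fine. Your handling of \eqref{el}---invoking the formal-power-series identities and then justifying substitution via the non-Archimedean double-series criterion---is the right outline; the bookkeeping you flag as ``the only delicate estimate'' is routine once one notes that all inner terms have valuation bounded below by a quantity tending to $+\infty$ in the outer index.
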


Note the basics of $p$-adic analysis, $p$-adic mathematical physics
are explained in \cite{Ko,R,S}.

Let $(X,\cb)$ be a measurable space, where $\cb$ is an algebra of
subsets $X$. A function $\m: \cb\to \bq_p$ is said to be a
$p$-adic measure if for any $A_1,...,A_n\subset\cb$ such that
$A_i\cap A_j=\emptyset$ ($i\neq j$) the equality holds
$$
\mu\bigg(\bigcup_{j=1}^{n} A_j\bigg)=\sum_{j=1}^{n}\mu(A_j).
$$

A $p$-adic measure is called a probability measure if $\mu(X)=1$. A
$p$-adic probability measure $\m$ is called {\it bounded} if
$\sup\{|\m(A)|_p : A\in \cb\}<\infty $. For more detail information
about $p$-adic measures we refer to \cite{Kh1},\cite{K3}.

\section{ $p$-adic Potts model and its $p$-adic Gibbs measures}

In the sequel we will use the notation $\bz_+=\{0,1,2,\cdots\}$.

Now define the $p$-adic Potts model on $\bz_+$ with spin values in
the set $\Phi=\{0,1,2,\cdots,\}$. Note that a configuration $\s$
on $\bz_+$ is defined as a function $x\in \bz_+\to\s(x)\in\Phi$;
in a similar manner one defines configurations $\s_n$ and
$\w_{(n)}$ on $[0,n]$ and $\{n\}$, respectively. The set of all
configurations on $\bz_+$ (resp. $[0,n]$, $\{n\}$) coincides with
$\Omega=\Phi^{\bz_+}$ (resp. $\Omega_{n}=\Phi^{[0,n]},\ \
\Omega_{\{n\}}=\Phi$). One can see that
$\Om_{n}=\Om_{n-1}\times\Phi$. Using this, for given
configurations $\s_{n-1}\in\Om_{n-1}$ and $\w\in\Om_{\{n\}}$ we
define their concatenations  by
\begin{equation*}
\s_{n-1}\vee\w=\bigg\{\{\s_{n-1}(k),k\in[0,n-1]\},\{\w\}\bigg\}.
\end{equation*}

The Hamiltonian $H_n:\Om_{n}\to\bq_p$ of $p$-adic countable state
Potts model has the form
\begin{equation}\label{Potts}
H_n(\s)=J\sum_{k=0}^{n-1}\delta_{\s(k),\s(k+1)},  \ \
n\in\mathbb{N},
\end{equation}
here $\s\in\Om_n$, $\delta$ is the Kronecker symbol and
\begin{equation}\label{J}
|J|_p\leq\frac{1}{p}.
\end{equation}

Note that such a condition provides the existence of a $p$-adic
Gibbs measure (see \eqref{mu}).

Let us construct $p$-adic Gibbs measures corresponding to the model.

A given set $A$ we put $\bq_p^{A}=\{\{x_i\}_{i\in A}:
x_i\in\bq_p\}.$

Assume that a function  $\h: \bn\to \bq_p^{\Phi}$, i.e.
$\h_n=\{h_{i,n}\}_{i\in\Phi}$, $n\in\bn$ is given and a non-zero
element $\l=\{\l(i)\}_{i\in\Phi}\in\bq_p^{\Phi}$ is fixed  such that
\begin{equation}\label{L}
|\l(n)|_p\to 0 \ \ \textrm{as} \ \ n\to\infty
\end{equation}
which is called a {\it weight}. In what follows, without losing
generality we may assume that $\l(0)\neq 0$.

Given $n=1,2,\dots$ a $p$-adic probability measure $\m^{(n)}_\h$ on
$\Om_{n}$ is defined by
\begin{equation}\label{mu}
\mu^{(n)}_{\h}(\s)=\frac{1}{Z^{(\h)}_{n}}\exp_p\bigg\{H_n(\s)\bigg\}h_{\s(n),n}\prod_{k=0}^n\l(\s(k)),
\end{equation}
here, $\s\in\Om_n$ and $Z_n^{(\h)}$ is the corresponding normalizing
factor called a {\it partition function} given by
\begin{equation}\label{ZN1}
Z^{(\h)}_n=\sum_{\s\in\Omega_{n}}\exp_p\bigg\{H_n(\s)\bigg\}h_{\s(n),n}\prod_{k=0}^n\l(\s(k)),
\end{equation}
here subscript $n$ and superscript $(\h)$ are accorded to the $Z$,
since it depends on $n$ and a function $\h$.

The condition \eqref{J} with non-Archimedeanity of the norm
$|\cdot|_p$ implies that $|H(\s)|_p <\frac{1}{p^{1/(p-1)}}$ for all
$\s\in\Om_n$, $n\in\bn$, which means the existence of $\exp_p$ in
\eqref{mu}, therefore, the measures $\m^{(n)}_\h$ are well defined.

One of the central results of the theory of probability concerns a
construction of an infinite volume distribution with given
finite-dimensional distributions, which is called {\it Kolmogorov's
Theorem} \cite{Sh}. Therefore, in this paper we are interested in
the same question but in a $p$-adic context. More exactly, we want
to define a $p$-adic probability measure $\m$ on $\Om$ such that it
would be compatible with defined ones $\m_\h^{(n)}$, i.e.
\begin{equation}\label{CM}
\m(\s\in\Om: \s|_{[0,n]}=\s_n)=\m^{(n)}_\h(\s_n), \ \ \ \textrm{for
all} \ \ \s_n\in\Om_{n}, \ n\in\bn.
\end{equation}

In general, \`{a} priori the existence of such a kind of measure
$\m$ is not known, since, there is not much information on
topological properties, such as compactness, of the set of all
$p$-adic measures defined even on compact spaces\footnote{In the
real case, when the state space is compact, then the existence
follows from the compactness of the set of all probability measures
(i.e. Prohorov's Theorem). When the state space is non-compact, then
there is a Dobrushin's Theorem \cite{Dob1,Dob2} which gives a
sufficient condition for the existence of the Gibbs measure for a
large class of Hamiltonians. In \cite{Ga} using that theorem it has
been established the existence of the Gibbs measure for the real
counterpart of the studied Potts model. It should be noted that
there are even nearest-neighbor models with countable state space
for which the Gibbs measure does not exists \cite{Sp}.}. Therefore,
at a moment, we can only use the $p$-adic Kolmogorov extension
Theorem (see \cite{GMR},\cite{KL}) which based on so called {\it
compatibility condition} for the measures $\m_\h^{(n)}$, $n\geq 1$,
i.e.
\begin{equation}\label{comp}
\sum_{\w\in\Phi}\m^{(n)}_\h(\s_{n-1}\vee\w)=\m^{(n-1)}_\h(\s_{n-1}),
\end{equation}
for any $\s_{n-1}\in\Om_{{n-1}}$. This condition according to the
theorem implies the existence of a unique $p$-adic measure $\m$
defined on $\Om$ with a required condition \eqref{CM}. Note that
more general theory of $p$-adic measures has been developed in
\cite{kas1,kas2}.

So, if for some function $\h$ the measures  $\m_\h^{(n)}$ satisfy
the compatibility condition, then there is a unique $p$-adic
probability measure, which we denote by $\m_\h$, since it depends on
$\h$. Such a measure $\m_\h$ is said to be {\it a generalized
$p$-adic Gibbs measure} corresponding to the $p$-adic Potts model.
By $G\cg(H)$ we denote the set of all generalized $p$-adic Gibbs
measures associated with functions $\h=\{\h_n,\ n\in\bn\}$. If
$|G\cg(H)|\geq 2$ (here $|A|$ stands for the cardinality of a set
$A$) i.e.  there are at least two different generalized $p$-adic
Gibbs measures in $G\cg(H)$, namely one can find two different
functions $\sb$ and $\h$ defined on $\bn$ such that there exist the
corresponding measures $\m_\sb$ and $\m_\h$, which are different,
then we say that {\it a phase transition} occurs for the model,
otherwise, there is {\it no phase transition}. If the function $\h$
has a special form, i.e. $\h=\{\exp_p(\kappa_{i,n})\}_{i\in\Phi}$
for some $\{\kappa_{i,n}\}\subset\bq_p$, then the corresponding
measure defined by \eqref{mu} is called {\it $p$-adic Gibbs
measure}. The set of all $p$-adic Gibbs measures is denoted by
$\cg(H)$.  If $|\cg(H)|\geq 2$, then we say that for this model
there exists {\it a strong phase transition}. Note that such kind of
measures and transitions for Ising and Potts models have been
studied in \cite{MR1,MR2,KM}.

Now one can ask for what kind of functions $\h$ the measures
$\m_\h^{(n)}$ defined by \eqref{mu} would satisfy the compatibility
condition \eqref{comp}. The following theorem gives an answer to
this question.

\begin{thm}\label{comp1}\cite{M} The measures $\m^{(n)}_\h$, $
n=1,2,\dots$ (see \eqref{mu}) satisfy the compatibility condition
\eqref{comp} if and only if for any $n\in \bn$ the following
equation holds:
\begin{equation}\label{eq1}
\hat h_{i,n}=\frac{\l(i)}{\l(0)}F_i(\hat \h_{n+1};\theta), \ \
i\in\bn
\end{equation}
here and below $\theta=\exp_p(J)$, a vector $\hat \h=\{\hat
h_i\}_{i\in\bn}\in\bq_p^\bn$ is defined by a vector
$\h=\{h_i\}_{i\in\Phi}$ as follows
\begin{equation}\label{H}
\hat h_i=\frac{h_i\l(i)}{h_0\l(0)}, \ \ \ i\in\bn
\end{equation}
and mappings $F_i:\bq_p^{\bn}\times\bq_p\to\bq_p$ are defined by
\begin{equation}\label{eq2}
F_i(\xb;\theta)=\frac{(\theta-1)x_i+\sum_{j=1}^{\infty}x_j+1}
{\sum_{j=1}^{\infty}x_j+\theta}, \ \ \xb=\{x_i\}_{i\in\bn}, \ \
i\in\bn.
\end{equation}
\end{thm}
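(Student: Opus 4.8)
The plan is to prove the equivalence by directly unwinding the compatibility condition \eqref{comp} and reducing it, through algebraic manipulation, to the stated fixed-point equation \eqref{eq1}. The proof is essentially a computation, so the main effort is bookkeeping rather than any deep idea; the core observation is that the Hamiltonian \eqref{Potts} has a telescoping nearest-neighbor structure, so that passing from $\Om_n$ to $\Om_{n-1}$ only affects the single interaction term $\delta_{\s(n-1),\s(n)}$.

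\medskip

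First I would fix $\s_{n-1}\in\Om_{n-1}$ and expand the left-hand side of \eqref{comp}. Summing $\m^{(n)}_\h(\s_{n-1}\vee\w)$ over $\w\in\Phi$ and using the factorization in \eqref{mu}, the contributions from the edges inside $[0,n-1]$ are common to every summand and factor out; what remains inside the sum over $\w$ is
\begin{equation*}
\sum_{\w\in\Phi}\exp_p\bigl(J\delta_{\s(n-1),\w}\bigr)\,h_{\w,n}\,\l(\w).
\end{equation*}
Writing $\theta=\exp_p(J)$ and splitting the Kronecker delta into the case $\w=\s(n-1)$ versus $\w\neq\s(n-1)$, this sum becomes $(\theta-1)h_{\s(n-1),n}\l(\s(n-1))+\sum_{j\in\Phi}h_{j,n}\l(j)$. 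Here I would invoke Lemma~\ref{ser} to guarantee that the infinite sum $\sum_{j\in\Phi}h_{j,n}\l(j)$ converges in $\bq_p$; this is where the weight condition \eqref{L} together with boundedness of the relevant $h_{j,n}$ enters, since $p$-adic series converge precisely when their terms tend to zero. On the right-hand side of \eqref{comp} the measure $\m^{(n-1)}_\h(\s_{n-1})$ carries the factor $h_{\s(n-1),n-1}\l(\s(n-1))$ together with the same interior edge factors and its own partition function $Z^{(\h)}_{n-1}$.

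\medskip

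Next I would cancel the common interior factors from both sides and take the ratio, which removes dependence on the bulk configuration and on the $\l$-products over $[0,n-1]$. Equating the two sides yields a relation of the form
\begin{equation*}
\frac{h_{\s(n-1),n-1}\l(\s(n-1))}{Z^{(\h)}_{n-1}}
=\frac{(\theta-1)h_{\s(n-1),n}\l(\s(n-1))+\sum_{j\in\Phi}h_{j,n}\l(j)}{Z^{(\h)}_{n}}.
\end{equation*}
To eliminate the two unknown partition functions I would divide the equation for a general spin value $i=\s(n-1)$ by the corresponding equation for the reference value $0$ (this is exactly why the normalization $\l(0)\neq 0$ was imposed). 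Introducing the renormalized variables $\hat h_{i,n}=h_{i,n}\l(i)/\bigl(h_{0,n}\l(0)\bigr)$ as in \eqref{H}, the ratio collapses: the partition functions disappear, the numerator becomes $(\theta-1)\hat h_{i,n+1}+\sum_{j\geq 1}\hat h_{j,n+1}+1$ and the denominator becomes $\sum_{j\geq 1}\hat h_{j,n+1}+\theta$, reproducing exactly the map $F_i$ of \eqref{eq2} applied to $\hat\h_{n+1}$, scaled by $\l(i)/\l(0)$.

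\medskip

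The main obstacle, and the step I would handle most carefully, is the $p$-adic convergence and the legitimacy of these manipulations in the infinite-state setting: one must check that all the series appearing in the numerators, denominators, and in $Z^{(\h)}_n$ genuinely converge in $\bq_p$, that the denominators $\sum_{j\geq 1}\hat h_{j,n+1}+\theta$ are nonzero so that the division is valid, and that the index shift from $n$ to $n+1$ in \eqref{eq1} is tracked correctly. Since $\theta=\exp_p(J)$ satisfies $|\theta|_p=1$ by Lemma~\ref{exp}, the denominator has $p$-adic absolute value $1$ unless the tail sum is large, which the weight condition \eqref{L} controls; this makes the division well defined and the argument reversible, giving the converse direction (that \eqref{eq1} implies \eqref{comp}) by reading the same chain of equalities backwards. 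This completes the equivalence.
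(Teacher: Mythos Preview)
The paper does not actually prove Theorem~\ref{comp1}: it is quoted from \cite{M} without argument, so there is no in-text proof to compare against. Your derivation is the standard one for compatibility statements of this type and is essentially correct; the nearest-neighbor structure forces the argument, so this is almost certainly what \cite{M} does as well.

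Two small points worth tidying. First, a bookkeeping slip: in your displayed identity after cancellation you carry an extra factor $\l(\s(n-1))$ on the left-hand side. Both sides of \eqref{comp} share the full product $\prod_{k=0}^{n-1}\l(\s(k))$, so after cancellation the left side should read $h_{\s(n-1),n-1}/Z^{(\h)}_{n-1}$, with no $\l$ factor. This is harmless for the argument, since the spurious factor would cancel anyway in the ratio step for $i$ against $0$, but the intermediate line as written is off. Second, your justification that the denominator $\sum_{j\geq 1}\hat h_{j,n+1}+\theta$ is nonzero ``since $|\theta|_p=1$ and the weight condition \eqref{L} controls the tail'' is not quite right: condition \eqref{L} bounds $|\l(j)|_p$, not $|\hat h_{j,n}|_p$, and the latter involves the unknown $h_{j,n}$. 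In the paper's framework convergence and nonvanishing of these sums are handled separately (Lemma~\ref{inf}, Remark~\ref{R2}) rather than inside the proof of Theorem~\ref{comp1}; for the equivalence itself one simply assumes the expressions make sense and reads the chain of equalities in both directions, as you do.
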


\begin{lem}\label{inf}\cite{M} Let $\{\hat\h\}$ be a solution of \eqref{eq1} such that $\sum_{j=1}^\infty\hat h_{j,n}\neq -\t$ for every
$n\in \bn$. Then for every $n\in\bn$ one has
\begin{equation}\label{inf1}
\sum_{j=1}^\infty \hat h_{j,n}<\infty.
\end{equation}
\end{lem}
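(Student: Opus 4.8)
The plan is to reduce the claim to a single convergence statement via Lemma~\ref{ser}: since we work in $\bq_p$, the series $\sum_{j=1}^\infty\hat h_{j,n}$ converges if and only if $\hat h_{j,n}\to0$ as $j\to\infty$. So, fixing $n\in\bn$, it suffices to show that $\hat h_{i,n}\to 0$. The natural device is the defining recursion \eqref{eq1} itself, which expresses $\hat h_{i,n}$ as the product of the weight ratio $\l(i)/\l(0)$ with the value $F_i(\hat\h_{n+1};\theta)$. Since $|\l(i)|_p\to0$ by the weight condition \eqref{L}, while $\l(0)\neq0$ is fixed, this product will tend to $0$ as soon as the factors $F_i(\hat\h_{n+1};\theta)$ are shown to be bounded in $p$-adic norm uniformly in $i$.

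To obtain that uniform bound I would analyze the expression \eqref{eq2}, writing $S_{n+1}=\sum_{j=1}^\infty\hat h_{j,n+1}$ so that
\[
F_i(\hat\h_{n+1};\theta)=\frac{(\theta-1)\hat h_{i,n+1}+S_{n+1}+1}{S_{n+1}+\theta}.
\]
The denominator $S_{n+1}+\theta$ does not depend on $i$ and is nonzero by the standing hypothesis $S_{n+1}\neq-\theta$, so $1/|S_{n+1}+\theta|_p$ is a fixed finite constant. The only $i$-dependent piece is $(\theta-1)\hat h_{i,n+1}$ in the numerator. Here I would invoke Lemma~\ref{exp}: because $\theta=\exp_p(J)$ with $|J|_p\leq1/p$, the element $J$ lies in the domain of $\exp_p$ (this is exactly the remark made after \eqref{J} for $p\geq3$), whence $|\theta|_p=1$ and $|\theta-1|_p=|J|_p\leq 1/p$. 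Moreover $S_{n+1}$ converges, so by Lemma~\ref{ser} its terms satisfy $\hat h_{i,n+1}\to0$ and in particular $\sup_i|\hat h_{i,n+1}|_p<\infty$. Applying the strong triangle inequality, the numerator is bounded by $\max\bigl\{|\theta-1|_p\sup_i|\hat h_{i,n+1}|_p,\;|S_{n+1}+1|_p\bigr\}$, a constant independent of $i$, so there is a constant $C$ with $|F_i(\hat\h_{n+1};\theta)|_p\leq C$ for all $i$. Then \eqref{eq1} gives $|\hat h_{i,n}|_p\leq (C/|\l(0)|_p)\,|\l(i)|_p\to0$, and Lemma~\ref{ser} yields the convergence \eqref{inf1}.

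The step I expect to require the most care is the justification that $S_{n+1}$ is a genuine convergent element of $\bq_p$, since the argument at level $n$ feeds on the boundedness of $\{\hat h_{i,n+1}\}$ at level $n+1$. This is not circular: being a solution of \eqref{eq1} already presupposes that each $F_i(\hat\h_{n+1};\theta)$ is well defined, which forces $S_{n+1}$ to converge for every $n$; equivalently, the argument should be read as the implication ``convergence at level $n+1$ $\Rightarrow$ convergence at level $n$'', propagated downward from the levels at which the defining equation is meaningful. I would make this dependence explicit at the outset, so that the boundedness $\sup_i|\hat h_{i,n+1}|_p<\infty$ used above rests on firm ground, after which the estimate on $|\hat h_{i,n}|_p$ closes the proof.
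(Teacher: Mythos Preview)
Your argument is correct. Note, however, that the present paper does not actually prove this lemma; it is quoted from \cite{M}, so there is no proof here to compare against. The route you take---bounding $|F_i(\hat\h_{n+1};\theta)|_p$ uniformly in $i$ by combining the strong triangle inequality, the hypothesis $S_{n+1}\neq-\theta$, and the bound $|\theta-1|_p\le 1/p$ from Lemma~\ref{exp}, and then using $|\l(i)|_p\to0$ from \eqref{L} to force $\hat h_{i,n}\to0$---is the natural one and is almost certainly what appears in \cite{M}.

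One small sharpening of your final paragraph: the ``downward propagation'' is really a one-step affair. For each $n\ge1$, the well-definedness of the right-hand side of \eqref{eq1} already requires $S_{n+1}=\sum_j\hat h_{j,n+1}$ to converge, so convergence of $S_m$ for every $m\ge2$ is built into the notion of solution. The only new content of the lemma is the convergence of $S_1$, and your estimate at $n=1$ delivers exactly that; for $n\ge2$ there is nothing left to prove. Framing it this way removes any appearance of circularity.
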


\begin{rem}\label{R2} If every sequence $\hat\h_n$ is bounded, then
\eqref{L},\eqref{H} with Lemma \ref{ser} imply that the series
$\sum\limits_{j=1}^\infty \hat h_{j,n}$ is always convergent.
\end{rem}

{\bf Observation 3.1.} Here we are going to underline a connection
between $q$-state Potts model with the defined one. First recall
that $q$-state Potts model is defined by the same Hamiltonian
\eqref{Potts}, but with the state space $\Phi_q=\{0,1,\dots,q-1\}$.
Similarly, one can define $p$-adic Gibbs measures for the $q$-state
Potts model, here instead of the weight $\{\l(i)\}$ we will take a
collection $\{\l(0),\l(1),\dots,\l(q-1)\}\subset\bq_p$.

Now consider countable Potts model with a weight $\{\l(i)\}$ such
that
\begin{equation}\label{qq}
\l(k)=0 \ \ \textrm{for all} \ k\geq q, \ q>1.
\end{equation}
In this case the corresponding Gibbs measures will coincide with
those of $q$-state Potts model. Indeed, let
\begin{eqnarray*}
&& \Om^c=\{\s\in\Om:\ \exists j\in\bz_+:\ \s(j)\geq q\}\\
&& \Om^{(q)}=\{\s\in\Om:\  \s(j)\leq q-1 \ \forall j\in\bz_+\}
\end{eqnarray*}
It is clear that $\Om^{(q)}=\Phi_q^{\bz_+}$. Let $\m$ be a Gibbs
measure of the countable Potts model with the given weight
corresponding to a solution $\h_n=\{h_{i,n}\}_{i\in\Phi}$ of
\eqref{eq1}. Note that here by Gibbs measure we mean genralized
$p$-adic Gibbs measure. From the definition \eqref{mu} we see that
the restriction of $\m$ to $\Om^c$ is zero, i.e.
$\m\lceil_{\Om^c}=0$. Moreover, from \eqref{eq1} and \eqref{qq} we
conclude that $h_{i,n}=0$ for all $i\geq q$. This means that vectors
$\h_n^{(q)}=\{h_{i,n}\}_{i\in\Phi_q}$ will be a solution of
\eqref{eq1} corresponding to the $q$-state Potts model. Therefore,
the restriction of $\m$ to $\Om^{(q)}$ coincides with Gibbs measure
of $q$-state Potts model with a weight
$\{\l(0),\l(1),\dots,\l(q-1)\}$ corresponding to a solution of
$\h_n^{(q)}$.

Hence, we conclude that under condition \eqref{qq} all Gibbs
measures corresponding to countable Potts model are described by
those measures of $q$-state Potts model.\\

Let us recall that a function $\{\h_n\}_{n\in\bn}$ is
translation-invariant if $\h_n=\h_{n+1}:=\h$ for every $n\in\bn$. It
is natural to ask is there a translation invariant solution of
\eqref{eq1}.

Now we are looking for the translation-invariant solution $\hat\h$
of \eqref{eq1}.  Then the equation can be written as follows
\begin{equation}\label{tran}
\hat h_i=\frac{\l(i)}{\l(0)}\bigg( \frac{(\theta-1)\hat
h_i+\sum_{j=1}^{\infty}\hat h_j+1} {\sum_{j=1}^{\infty}\hat
h_j+\theta} \bigg),  \ \ i\in\bn.
\end{equation}

Investigating, the derived equation \eqref{tran}, in \cite{M} we
have proved the following

\begin{thm}\label{1}\cite{M} Let $0<|J|_p<p^{-1/(p-1)}$ and
for the weight $\l$ the condition
\begin{equation}\label{l1}
\l(0)=1, \ \ \textrm{and} \ \ |\l(m)|_p<1  \qquad \forall m\in\bn.
\end{equation} be satisfied.
Then for one dimensional $p$-adic  Potts model \eqref{Potts} there
is a generalized $p$-adic Gibbs measure, i.e. $|G\cg(H)|\geq 1$.
Moreover, there is a unique $p$-adic Gibbs measure, i.e.
$|\cg(H)|=1$.
\end{thm}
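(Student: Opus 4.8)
The plan is to turn the compatibility equation \eqref{eq1} into a fixed point problem for a single nonlinear operator on $c_0$, and to exploit the hypothesis $0<|J|_p<p^{-1/(p-1)}$ to make that operator a contraction. Throughout put $\theta=\exp_p(J)$; by Lemma \ref{exp} one has $|\theta|_p=1$ and $|\theta-1|_p=|J|_p\le 1/p$. Since $\cg(H)\subseteq G\cg(H)$, it suffices to prove that there is exactly one $p$-adic Gibbs measure; the bound $|G\cg(H)|\ge 1$ then follows for free. For brevity I assume $\l(i)\neq 0$ for all $i$ (a coordinate with $\l(i)=0$ forces $\hat h_{i,n}=0$ by \eqref{eq1}, and is handled as in Observation 3.1).

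First I would fix the state space. Define $T\colon c_0\to c_0$ by $(T\xb)_i=\l(i)F_i(\xb;\theta)$ with $F_i$ as in \eqref{eq2} (recall $\l(0)=1$), and set, using the sup-norm $\|\xb\|=\sup_i|x_i|_p$,
\[
K=\Big\{\xb\in c_0:\ \frac{x_i}{\l(i)}\in B(1,p^{-1/(p-1)})\ \ \forall i\in\bn\Big\}.
\]
The role of $K$ is that, via $\kappa_{i,n}=\log_p(x_i/\l(i))$ and Lemma \ref{exp}, a solution $\{\hat\h_n\}$ of \eqref{eq1} satisfies $\hat\h_n\in K$ for every $n$ \emph{iff} it arises from a function of the form $\h=\{\exp_p(\kappa_{i,n})\}$, i.e.\ iff the associated measure is a $p$-adic Gibbs measure. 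I would then check $T(K)\subseteq K$: for $\xb\in K$ one has $|x_i|_p=|\l(i)|_p\le 1/p$, so with $s(\xb)=\sum_{j\ge 1}x_j$ (convergent by Lemma \ref{ser}) we get $|s(\xb)|_p\le 1/p$ and $|s(\xb)+\theta|_p=1$; since
\[
F_i(\xb;\theta)-1=\frac{(\theta-1)(x_i-1)}{s(\xb)+\theta}
\]
and $|x_i-1|_p=1$, it follows that $|F_i(\xb;\theta)-1|_p=|J|_p<p^{-1/(p-1)}$, which is exactly $(T\xb)_i/\l(i)\in B(1,p^{-1/(p-1)})$, while $|F_i|_p=1$ and $|\l(i)|_p\to 0$ give $T\xb\in c_0$.

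The key step is the contraction estimate. For $\xb,\yb\in K$ a direct computation yields
\[
F_i(\xb;\theta)-F_i(\yb;\theta)=\frac{(\theta-1)\big[x_i(s(\yb)-s(\xb))+(s(\xb)+\theta)(x_i-y_i)+(s(\xb)-s(\yb))\big]}{(s(\xb)+\theta)(s(\yb)+\theta)}.
\]
Each of the three terms in the bracket has $p$-adic norm at most $\|\xb-\yb\|$, and both factors in the denominator have norm $1$, so $|F_i(\xb;\theta)-F_i(\yb;\theta)|_p\le |J|_p\,\|\xb-\yb\|$. Multiplying by $|\l(i)|_p\le 1/p$ and taking the supremum over $i$ gives
\[
\|T\xb-T\yb\|\le |J|_p\Big(\sup_{i\ge 1}|\l(i)|_p\Big)\|\xb-\yb\|\le\frac{|J|_p}{p}\,\|\xb-\yb\|,
\]
so $T$ is a contraction on $K$ with constant $<1$. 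As $K$ is a bounded, closed (hence complete) subset of $c_0$, the Banach fixed point theorem produces a unique $\hat\h^*\in K$ with $T\hat\h^*=\hat\h^*$. This $\hat\h^*$ is a translation-invariant solution of \eqref{tran} lying in $K$, hence defines a $p$-adic Gibbs measure and yields $|\cg(H)|\ge 1$.

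Finally I would upgrade this to genuine uniqueness. Any $p$-adic Gibbs measure corresponds to a (not necessarily translation-invariant) solution $\{\hat\h_n\}_{n\in\bn}$ of \eqref{eq1} with $\hat\h_n\in K$ and $\hat\h_n=T\hat\h_{n+1}$ for all $n$, so $\hat\h_n=T^m\hat\h_{n+m}$. Iterating the contraction bound,
\[
\|\hat\h_n-\hat\h^*\|=\|T^m\hat\h_{n+m}-T^m\hat\h^*\|\le\Big(\frac{|J|_p}{p}\Big)^m\mathrm{diam}(K),
\]
which tends to $0$ as $m\to\infty$; hence $\hat\h_n=\hat\h^*$ for every $n$, the measure is the one attached to $\hat\h^*$, and $|\cg(H)|=1$. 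The main obstacle is the contraction estimate: one must rearrange $x_i s(\yb)-y_i s(\xb)=x_i(s(\yb)-s(\xb))+s(\xb)(x_i-y_i)$ so that \emph{every} surviving bracket term carries a factor of order $\|\xb-\yb\|$, and it is precisely the strict bound $|J|_p=|\theta-1|_p<1$ (not merely $\le 1$) that pushes the Lipschitz constant below $1$. Note that this threshold is independent of $p$, exactly as the theorem emphasizes.
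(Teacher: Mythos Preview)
The paper does not actually prove Theorem~\ref{1}; it is quoted from \cite{M} without proof, so there is no in-paper argument to compare against directly. Your proof is correct: the identity
\[
F_i(\xb;\theta)-1=\frac{(\theta-1)(x_i-1)}{s(\xb)+\theta}
\]
gives $T(K)\subset K$, the rearrangement of $F_i(\xb;\theta)-F_i(\yb;\theta)$ pulling out the factor $\theta-1$ is exactly right, and the iteration $\hat\h_n=T^m\hat\h_{n+m}$ combined with the uniform bound $\mathrm{diam}(K)\le\max_i|\l(i)|_p<\infty$ kills any non-translation-invariant solution in $K$.

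It is worth noting that your method is precisely the template the present paper adopts for the related uniqueness result Theorem~\ref{uni4} (there under the different weight regime \eqref{PT}--\eqref{PT1}): one extracts the factor $|\theta-1|_p$ from the increment of $F_i$, bounds the remaining bracket by $\max\{|x_i-y_i|_p,|X-Y|_p\}$, and iterates. The paper splits off the first coordinate because under \eqref{PT} it behaves differently from the tail, whereas under \eqref{l1} all coordinates satisfy $|\l(i)|_p<1$ and your uniform treatment on $K$ is cleaner. Two minor points you might make explicit: the equivalence ``$\hat\h_n\in K$ iff the measure lies in $\cg(H)$'' uses the homomorphism property $\exp_p(a)\exp_p(b)=\exp_p(a+b)$ on $B(0,p^{-1/(p-1)})$, which the paper does not state; and the closedness of $K$ (needed for Banach's theorem) follows because $p$-adic open balls are clopen.
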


\begin{rem}\label{RGG(H)} Under the condition \eqref{l1} from Theorem \ref{1} it naturally arises
a question: is it possible that $|G\cg(H)\setminus\cg(H)|\geq 1$.
It turns out this situation can occur. Indeed, let us consider
\begin{eqnarray}\label{l12}
&&\l(0)=1, \ \ \l(1)=\l(2)=a, \ \ \l(m)_p=0 \qquad \forall m\geq 3\\
\label{l12a} && |a|_p<1.
\end{eqnarray}
It is evident that in this case \eqref{l1} is satisfied. Then
\eqref{tran} reduces to
\begin{eqnarray}\label{r1}
\hat h_1&=&a\bigg( \frac{\theta\hat h_1+\hat h_2+1} {\hat
h_1+h_2+\theta} \bigg),\\
\label{r2} \hat h_2&=&a\bigg( \frac{\theta\hat h_2+\hat h_1+1}
{\hat h_1+h_2+\theta} \bigg).
\end{eqnarray}
From \eqref{r1},\eqref{r2} one gets
\begin{eqnarray}\label{r3}
\frac{\hat h_1}{\hat h_2}=\bigg(\frac{\theta\hat h_1+\hat h_2+1}
{\theta\hat h_2+\hat h_1+1} \bigg)
\end{eqnarray}
which implies
\begin{equation*}
(\hat h_1-\hat h_2)(\hat h_1+\hat h_2+1)=0.
\end{equation*}
This means that either $\hat h_1=\hat h_2$ or $\hat h_1=-\hat
h_2-1$.

Now assume that $\hat h_1=-\hat h_2-1$, and substituting it to
\eqref{r1} we immediately find that $\hat h_1=0$ and $\hat
h_2=-1$. From \eqref{H} and \eqref{l12}  one gets that $h_1=0$.
This means that associated measure \eqref{mu} is a generalized
$p$-adic Gibbs one, i.e. belongs to $G\cg(H)\setminus\cg(H)$.

Let $\hat h_1=\hat h_2$. Then again substituting it to \eqref{r1}
and after little algebra one gets
\begin{equation}\label{rq}
Q(\hat h_1)=0
\end{equation}
where $Q(x)=2x^2+((1-\t)(1-a)+1-2a)x-a$. From \eqref{l12a} one can
see that $|Q(0)|_p=|a|_p<1$ and $|Q'(0)|_p=1$. Therefore, thanks to
the Hansel Lemma the equation \eqref{rq} has two solutions $\hat
h^{(1)}, \hat h^{(2)}\in\bq_p$ such that $|\hat h^{(1)}|_p=|a|_p,$
and $|\hat h^{(2)}|_p=1$. The measure corresponding to $\hat
h^{(1)}$ due to Theorem \ref{1} belongs to $\cg(H)$. But from
\eqref{H} and \eqref{l12a} one infers that the measure associated
with $\hat h^{(2)}$ belongs to $G\cg(H)\setminus\cg(H)$.

Hence, for the model \eqref{Potts} with a weight \eqref{l12} we
have $|G\cg(H)\setminus\cg(H)|\geq 2$, since $|\cg(H)|=1$.
\end{rem}

\section{Phase transition}

In this section we are going to show that the equation \eqref{eq1}
has at least two translation-invariant solutions under some
conditions.

In this section we will assume the following
\begin{equation}\label{PT}
\l(0)=1, \ \ \l(1)=\a, \ \ \textrm{and} \ \ |\l(m)|_p<1  \qquad
\forall m\geq 2,
\end{equation}
here $\a\in\bq_p$ such that
\begin{equation}\label{PT1}
|\a|_p=1, \ \ \ |1-\a|_p\leq 1/p.
\end{equation}

It is obvious that in this case \eqref{l1} is not satisfied. Now we
are going to find translation invariant solution of \eqref{eq1},
i.e. $\hat\h_n=\hat\h_m$ for all $n,m\in\bn$. Therefore, we assume
that $\hat\h_1=(x_1,\dots,x_n,\dots)$. Let us for the sake of
shortness,
 a given sequence $\xb=\{x_j\}_{j\geq 2}$ we denote
\begin{equation}\label{x1}
X:=\sum_{j=2}^\infty x_j.
\end{equation}

If $\hat\h_1$ is a translation invariant solution, then the first
equation in \eqref{tran} with \eqref{PT} can be rewritten by
\begin{equation}\label{xx1}
x_1=\a\bigg(\frac{\t x_1+X+1}{x_1+X+\t}\bigg).
\end{equation}
We reduced the last equation to
\begin{equation}\label{xx1}
P(x_1)=0,
\end{equation}
where $P(x)=x^2+(X+\t(1-\a))x-\a(X+1)$.

Direct checking shows that
\begin{equation}\label{Hen}
P(1)=(1-\a)(X+1+\t), \ \ \ P'(1)=2+X+\t(1-\a).
\end{equation}
If $|X+2|_p=1$, then \eqref{PT1} with the Hensel's Lemma implies
that \eqref{xx1} has a solution $x_{+,1}$ belonging to $\bq_p$.
Hence, the Vieta Theorem yields that the second solution $x_{-,1}$
of \eqref{xx1} also belongs to $\bq_p$. Note that for the both
solutions $x_{\pm,1}$ due to Hensel Lemma we have
\begin{eqnarray}\label{x-1}
|x_{+,1}-1|_p\leq 1/p.
\end{eqnarray}
Now keeping in mind that $p\geq 3$, from \eqref{x-1} one finds
\begin{eqnarray}\label{x-2}
|x_{-,1}-1|_p=1.
\end{eqnarray}

In the sequel we will need an exact form of these solutions, which
can be written as follows
\begin{equation}\label{x12}
x_{\pm,1}=\frac{(\a-1)\t-X\pm\sqrt{D_X}}{2},
\end{equation}
where
\begin{eqnarray}\label{DX}
D_X&=&(X+\t(1-\a))^2+4\a(X+1) \nonumber\\
&=&\t^2(1-\a)^2+2(2X+X\t+2)(1-\a)+(X+2)^2
\end{eqnarray}

Note that the existence of the solutions $x_{\pm,1}$ yields the
existence $\sqrt{D_X}$.

Let us now substitute \eqref{x12} into $F_i$ in \eqref{eq2}, which
has a form
\begin{equation}\label{F1}
F^{(\pm)}_i(\xb;\theta)=\frac{2(\theta-1)x_i+(\a-1)\t+X\pm\sqrt{D_X}+2}
{(\a+1)\t+X\pm\sqrt{D_X}}, \ \ i\geq 2,
\end{equation}
where $\xb=\{x_i\}_{i\geq 2}$.

Note that from \eqref{H} and \eqref{L} we see that $|x_n|_p\to 0$ as
$n\to\infty$. Therefore, it is natural to consider the following
space
\begin{equation}\label{c0}
c_0=\{\{x_n\}_{n\geq 2}\subset\bq_p: \ |x_n|_p\to 0, \  \
n\to\infty\}
\end{equation}
with a norm $\|x\|=\max\limits_n|x_n|_p$. According to Lemma
\ref{ser} for any $\{x_n\}\in c_0$ we have
$\sum\limits_{j=2}^\infty x_j<\infty$.

Define
\begin{equation}\label{B}
\Bb_{r}=\{\{x_n\}\in c_0: \ \|x\|\leq r\},
\end{equation}
where $r\in\{p^k: k\in\bz\}$.  It is clear that $\Bb_{r_p}$ is a
closed subset of $c_0$. Now consider the following  mapping
\begin{equation}\label{F}
(\cf^{(\pm)}(\xb))_i=\l(i)F^{(\pm)}_i(\xb,\t), \ \ i\geq 2,
\end{equation}
where $\xb=\{x_n\}\in c_0$.

Now our aim is to show the existence of a fixed point of
$\cf^{(\pm)}$.

Put
$$
\d=\max_{i\geq 2}|\l(i)|_p.
$$
From \eqref{PT} one immediately finds that $\d<1$.

Note that according to the condition \eqref{PT} from \eqref{H} we
obtain $|x_n|_p\leq |\l(n)|_p$, $\forall n\geq 2$, which implies
that any solution of \eqref{eq1} belongs to $\Bb_{\d}$.

\begin{lem}\label{inv} Let the conditions \eqref{PT},\eqref{PT1} be satisfied for
$\l$. Then $\cf^{(+)}(\Bb_{\d})\subset \Bb_{\d}$.
\end{lem}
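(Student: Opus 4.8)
The plan is to reduce everything to a single uniform estimate. Since $(\cf^{(+)}(\xb))_i=\l(i)F^{(+)}_i(\xb,\t)$ by \eqref{F}, and since $\Bb_\d=\{x\in c_0:\|x\|\le\d\}$ asks for both the sup-norm bound $|(\cf^{(+)}(\xb))_i|_p\le\d$ and the decay $|(\cf^{(+)}(\xb))_i|_p\to0$ (membership in $c_0$), I would first observe that $|\l(i)|_p\le\d$ for all $i\ge2$ by the definition of $\d$, and $|\l(i)|_p\to0$ by \eqref{L}. Hence if I can prove the single bound $|F^{(+)}_i(\xb,\t)|_p\le1$, uniformly in $i\ge2$ and in $\xb\in\Bb_\d$, then $|(\cf^{(+)}(\xb))_i|_p=|\l(i)|_p\,|F^{(+)}_i|_p\le|\l(i)|_p\le\d$ gives the norm bound, while $|\l(i)|_p\to0$ gives the decay, and the two together yield $\cf^{(+)}(\xb)\in\Bb_\d$.

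First I would collect the elementary $p$-adic facts. By \eqref{J} and Lemma \ref{exp}, $|\t|_p=1$ and $|\t-1|_p=|J|_p\le1/p$; hypotheses \eqref{PT},\eqref{PT1} give $|\a|_p=1$, $|1-\a|_p\le1/p$, and $|\l(i)|_p\le1/p$ for $i\ge2$, so $\d\le1/p$; and since $p\ge3$ we have $|2|_p=1$. For $\xb\in\Bb_\d$ the strong triangle inequality applied to \eqref{x1} gives $|X|_p\le\max_{j\ge2}|x_j|_p\le\d<1$, whence $|X+2|_p=1$. By the computation \eqref{Hen} this is precisely the Hensel condition, so Lemma \ref{HL} produces the root $x_{+,1}$ (and hence $\sqrt{D_X}$) together with the crucial congruence \eqref{x-1}, namely $|x_{+,1}-1|_p\le1/p$; this is exactly what selects the branch of $\sqrt{D_X}$.

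The heart of the argument is the reduction modulo $p$ of the numerator and denominator of $F^{(+)}_i$ in \eqref{F1}. Writing $x_{+,1}=\tfrac12\big((\a-1)\t-X+\sqrt{D_X}\big)$ from \eqref{x12} and using $|x_{+,1}-1|_p\le1/p$, $|X|_p\le1/p$, $|(\a-1)\t|_p\le1/p$, I expect $\sqrt{D_X}\equiv2\pmod p$, i.e. $|\sqrt{D_X}-2|_p\le1/p$. Then in the denominator $(\a+1)\t+X+\sqrt{D_X}$ the three summands reduce to $2,0,2$ modulo $p$, so the denominator is $\equiv4\pmod p$ and, as $p\ge3$, has norm $1$; the numerator $2(\t-1)x_i+(\a-1)\t+X+\sqrt{D_X}+2$ reduces to $0+0+0+2+2\equiv4$, again of norm $1$ (here $|2(\t-1)x_i|_p\le|x_i|_p/p\le1/p^2$). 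Therefore $|F^{(+)}_i(\xb,\t)|_p=1$, which is exactly the uniform bound needed, and the inclusion $\cf^{(+)}(\Bb_\d)\subset\Bb_\d$ follows.

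The main obstacle is this residue computation: the statement hinges on the denominator of $F^{(+)}_i$ not dropping in norm, i.e. on $|(\a+1)\t+X+\sqrt{D_X}|_p=1$, and on pinning down the correct sign of $\sqrt{D_X}$. Both are controlled by \eqref{x-1}, which in turn rests on $|1-\a|_p\le1/p$ and on $p\ge3$ (so that $|2|_p=|4|_p=1$). If $x_{+,1}$ had reduced to something other than $1$, or if $p=2$ forced $|4|_p<1$, the denominator could lose norm and the inclusion would fail. I would therefore isolate the branch selection and these two norm computations as the single delicate point, the remaining estimates being routine strong-triangle-inequality bookkeeping.
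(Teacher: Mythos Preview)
Your proposal is correct and follows essentially the same approach as the paper: both arguments reduce to showing $|F^{(+)}_i(\xb,\t)|_p=1$ by checking that numerator and denominator of \eqref{F1} are each $\equiv 4\pmod p$ once one knows $\sqrt{D_X}\equiv 2\pmod p$. The only cosmetic differences are that you obtain $|\sqrt{D_X}-2|_p\le 1/p$ by inverting \eqref{x12} from the Hensel estimate \eqref{x-1}, whereas the paper reads it off directly from the expansion \eqref{DX}, and that you explicitly verify the $c_0$-decay via $|\l(i)|_p\to 0$, which the paper leaves implicit.
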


\begin{proof} Let $\xb\in \Bb_\d$. Then
\begin{equation}\label{est}
|X|_p=\bigg|\sum_{j=2}^\infty x_j\bigg|_p\leq\|\xb\|\leq\d.
\end{equation}
Therefore, one has $|X+2|_p=1$, and according to the above made
argument, we infer the existence of $\sqrt{D_X}$. Now using this
fact and \eqref{est}, from \eqref{DX} and \eqref{PT1} we conclude
that $\sqrt{D_X}=2+\e$ with $|\e|_p<1$, which with \eqref{PT1}
implies that
\begin{eqnarray}\label{DX2}
&&|\sqrt{D_X}+2|_p=1, \ \ \ |\a\t+\sqrt{D_X}|_p=|3|_p,\\[2mm]\label{DX22}
&&|\sqrt{D_X}-2|_p\leq \frac{1}{p}, \ \ \ |\a\t-\sqrt{D_X}|_p=1.
\end{eqnarray}

Then by means of \eqref{est},\eqref{DX2} and \eqref{PT1} we have
\begin{eqnarray*}
|(\cf^{(+)}(\xb))_i|_p&=&|\l(i)|_p\bigg|\frac{2(\theta-1)x_i+(\a-1)\t+X+\sqrt{D_X}+2}
{(\a+1)\t+X+\sqrt{D_X}}
\bigg|_p\\
&=&|\l(i)|_p\bigg|\frac{\sqrt{D_X}+2}{\a\t-1+\t-1+\sqrt{D_X}+2}\bigg|_p\\
&=&|\l(i)|_p\leq \d
\end{eqnarray*}
for all $i\geq 2$, which implies $\cf^{(+)}(\Bb_{\d})\subset
\Bb_{\d}$. This completes the proof.
\end{proof}

Before going to the main result we need some auxiliary facts.

\begin{lem}\label{DXY} One has
$$
\sqrt{D_X}-\sqrt{D_Y}=\frac{(X-Y)(X+Y+2\t(1-\a)+4\a)}{\sqrt{D_X}+\sqrt{D_Y}}.
$$
\end{lem}
\begin{proof} From \eqref{DX} we immediately find
$$
D_X-D_Y=(X-Y)(X+Y+2\t(1-\a)+4\a)
$$
which with
$$
\sqrt{D_X}-\sqrt{D_Y}=\frac{D_X-D_Y}{\sqrt{D_X}+\sqrt{D_Y}}
$$
implies the assertion.
\end{proof}

Denote
\begin{equation}\label{xi}
\xi_X=(\a-1)\t+X+\sqrt{D_X}+2.
\end{equation}

From the direct calculation we can prove the following

\begin{lem}\label{xiDXY} One has
\begin{eqnarray}\label{xi1}
&&\xi_X-\xi_Y=X-Y+\sqrt{D_X}-\sqrt{D_Y};\\\label{xi2}
&&Y\xi_X-X\xi_Y=((\a-1)\t+2)(Y-X)+Y\sqrt{D_X}-X\sqrt{D_Y};\\
\label{xi3}
&&\xi_X\sqrt{D_Y}-\xi_Y\sqrt{D_X}=((\a-1)\t+2)(\sqrt{D_Y}-\sqrt{D_X})+X\sqrt{D_Y}-Y\sqrt{D_X}.
\end{eqnarray}
\end{lem}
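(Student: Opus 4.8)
The plan is to treat all three identities as purely algebraic consequences of the definition \eqref{xi}, namely $\xi_X=(\a-1)\t+X+\sqrt{D_X}+2$, together with the corresponding expression for $\xi_Y$ obtained by replacing $X$ with $Y$ (and $\sqrt{D_X}$ with $\sqrt{D_Y}$). Neither the explicit form \eqref{DX} of $D_X$ nor Lemma \ref{DXY} is needed at this stage; one only substitutes and expands, and the entire content of the lemma lies in observing which terms cancel.

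For \eqref{xi1} I would simply subtract: since $\xi_X$ and $\xi_Y$ share the common summands $(\a-1)\t$ and $2$, these drop out of $\xi_X-\xi_Y$, leaving $X-Y+\sqrt{D_X}-\sqrt{D_Y}$, which is the claim.

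For \eqref{xi2} I would expand $Y\xi_X-X\xi_Y$. Distributing $Y$ over $\xi_X$ and $X$ over $\xi_Y$ produces the terms $YX$ and $XY$, which cancel; collecting the remaining contributions of $(\a-1)\t$ and of $2$ gives the factor $(\a-1)\t+2$ multiplying $(Y-X)$, while the radical terms assemble into $Y\sqrt{D_X}-X\sqrt{D_Y}$. Identity \eqref{xi3} is handled in exactly the same spirit, expanding $\xi_X\sqrt{D_Y}-\xi_Y\sqrt{D_X}$; here it is the symmetric cross term $\sqrt{D_X}\sqrt{D_Y}$ that cancels, and grouping the $(\a-1)\t$ and $2$ contributions yields the stated $((\a-1)\t+2)(\sqrt{D_Y}-\sqrt{D_X})$ together with $X\sqrt{D_Y}-Y\sqrt{D_X}$.

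The main obstacle is nothing more than careful bookkeeping of signs and cancellations; there is no genuine analytic difficulty, which is why the statement is phrased as following from a direct calculation. The one point to be attentive to is keeping the roles of $X$ and $Y$ (and of the two radicals $\sqrt{D_X}$, $\sqrt{D_Y}$) straight in the two products \eqref{xi2} and \eqref{xi3}, since a sign slip there would propagate into the subsequent estimates where these identities are applied.
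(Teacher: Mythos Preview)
Your proposal is correct and is precisely the ``direct calculation'' the paper invokes: substitute $\xi_X=(\a-1)\t+X+\sqrt{D_X}+2$ and its $Y$-analogue, then cancel the obvious symmetric terms in each of the three expressions. The paper gives no further details, so your write-up in fact supplies more than the original does.
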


Now we are in a pose to formulate the main estimation.

\begin{thm}\label{main} Let the conditions \eqref{PT},\eqref{PT1} be satisfied for
$\l$. Then one has
\begin{eqnarray}\label{fxy+}
\|\cf^{(+)}(\xb)-\cf^{(+)}(\yb)\|\leq \d |\t-1|_p\|\xb-\yb\|,
\end{eqnarray}
for every $\xb,\yb\in \Bb_{\d}$.
\end{thm}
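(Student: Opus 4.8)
The plan is to reduce \eqref{fxy+} to a single scalar estimate for each coordinate $i\ge 2$ and then take the supremum. First I would set up convenient notation: write the denominator of $F^{(+)}_i$ in \eqref{F1} as $\eta_X:=(\a+1)\t+X+\sqrt{D_X}$, and recall $\xi_X=(\a-1)\t+X+\sqrt{D_X}+2$ from \eqref{xi}. A direct check gives the crucial identity $\eta_X=\xi_X+2(\t-1)$, so that $F^{(+)}_i(\xb)=\big(2(\t-1)x_i+\xi_X\big)/\eta_X$. By \eqref{F} we have $(\cf^{(+)}(\xb))_i-(\cf^{(+)}(\yb))_i=\l(i)\big(F^{(+)}_i(\xb)-F^{(+)}_i(\yb)\big)$, so it suffices to estimate the latter difference and then use $|\l(i)|_p\le\d$.

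Next I would put the two fractions over the common denominator $\eta_X\eta_Y$ and simplify the numerator. The key algebraic step, and the heart of the whole estimate, is to show that the numerator carries an overall factor $2(\t-1)$. Expanding $\xi_X\eta_Y-\xi_Y\eta_X$ and using the three identities of Lemma \ref{xiDXY} (equivalently, substituting $\eta=\xi+2(\t-1)$) collapses it to $2(\t-1)(\xi_X-\xi_Y)$; combining with the $x_i$-dependent terms one obtains
\[
F^{(+)}_i(\xb)-F^{(+)}_i(\yb)=\frac{2(\t-1)\,B_i}{\eta_X\eta_Y},\qquad B_i=(x_i-y_i)\eta_Y+(1-y_i)(\xi_X-\xi_Y).
\]
This factorization is exactly what produces the claimed multiplier $|\t-1|_p$, and extracting it cleanly is the main obstacle; the regrouping of $B_i$ into the displayed two-term form is what makes the remaining estimates transparent.

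Finally I would estimate each ingredient in the $p$-adic norm on $\Bb_\d$. As in the proof of Lemma \ref{inv}, $\sqrt{D_X}=2+\e$ with $|\e|_p<1$ and $|4|_p=1$ (here $p\ge 3$ is used), whence $|\eta_X|_p=|\eta_Y|_p=1$, so $|\eta_X\eta_Y|_p=1$; and since $|y_i|_p\le\d<1$ we get $|1-y_i|_p=1$. Writing $\xi_X-\xi_Y=(X-Y)+(\sqrt{D_X}-\sqrt{D_Y})$ from \eqref{xi1}, note $|X-Y|_p\le\|\xb-\yb\|$, while Lemma \ref{DXY} together with $|X+Y+2\t(1-\a)+4\a|_p=|4\a|_p=1$ and $|\sqrt{D_X}+\sqrt{D_Y}|_p=1$ gives $|\sqrt{D_X}-\sqrt{D_Y}|_p=|X-Y|_p\le\|\xb-\yb\|$; hence $|\xi_X-\xi_Y|_p\le\|\xb-\yb\|$ and, by the strong triangle inequality, $|B_i|_p\le\|\xb-\yb\|$. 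Since $|2|_p=1$, $|\eta_X\eta_Y|_p=1$ and $|\l(i)|_p\le\d$, this yields $|(\cf^{(+)}(\xb))_i-(\cf^{(+)}(\yb))_i|_p\le\d\,|\t-1|_p\,\|\xb-\yb\|$ for every $i\ge 2$, and taking the maximum over $i$ gives \eqref{fxy+}.
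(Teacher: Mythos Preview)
Your proof is correct and follows essentially the same route as the paper: put the coordinate difference over the common denominator $\eta_X\eta_Y$, extract the overall factor $2(\t-1)$ from the numerator, and then estimate the remaining bracket via the strong triangle inequality using $|\eta_X|_p=|\eta_Y|_p=1$ and Lemma~\ref{DXY}. The one notable streamlining is your identity $\eta_X=\xi_X+2(\t-1)$, which gives $\xi_X\eta_Y-\xi_Y\eta_X=2(\t-1)(\xi_X-\xi_Y)$ in a single line; the paper instead computes this term (their ``II'') by invoking all three parts of Lemma~\ref{xiDXY} and the quantity $\Delta$ of \eqref{Del}, arriving at the equivalent form $2(1-\t)(1+\Delta)(Y-X)$. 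Your shortcut makes Lemma~\ref{xiDXY} essentially redundant (only \eqref{xi1} is used), and your final bracket $B_i=(x_i-y_i)\eta_Y+(1-y_i)(\xi_X-\xi_Y)$ is just the paper's $(x_i-y_i)\eta_X+(1-x_i)(1+\Delta)(X-Y)$ with the roles of $X,Y$ swapped in the pivot.
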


\begin{proof} Let $\xb,\yb\in\Bb_{\d}$, then  from \eqref{F} we have
\begin{eqnarray}\label{FXY1}
|\cf^{(+)}(\xb))_i-\cf^{(+)}(\yb)_i|_p&=&|\l(i)|_p\bigg|
\frac{2(\t-1)x_i+\xi_X}{(\a+1)\t+X+\sqrt{D_X}}-\frac{2(\t-1)y_i+\xi_Y}{(\a+1)\t+Y+\sqrt{D_Y}}\bigg|_p\nonumber\\
&=&|\l(i)|_p\bigg|2(\t-1)\bigg[\underbrace{x_i((\a+1)\t+Y+\sqrt{D_Y})-y_i((\a+1)\t+X+\sqrt{D_X})}_{\textrm{I}}\bigg]+\nonumber\\
&&\underbrace{((\a+1)\t+Y+\sqrt{D_Y})\xi_X-((\a+1)\t+X+\sqrt{D_X})\xi_Y}_{\textrm{II}}\bigg|_p.
\end{eqnarray}

Now step by step, let us estimate I and II.

Put
\begin{eqnarray}\label{Del}
\D=\frac{X+Y+2\t(1-\a)+4\a}{\sqrt{D_X}+\sqrt{D_Y}}.
\end{eqnarray}

Let us first consider I. Then using Lemma \ref{DXY} one finds
\begin{eqnarray}\label{I}
\textrm{I}&=&(x_i-y_i)[(\a+1)\t+X+\sqrt{D_X}]+x_i[Y-X+\sqrt{D_Y}-\sqrt{D_X}]\nonumber
\\[2mm]
&=&(x_i-y_i)[(\a+1)\t+X+\sqrt{D_X}]-x_i(1+\D)(X-Y)
\end{eqnarray}

Now turn to II. We easily find that
\begin{eqnarray*}
\textrm{II}=(\a+1)\t(\xi_X-\xi_Y)+Y\xi_X-X\xi_Y+\xi_X\sqrt{D_Y}-\xi_Y\sqrt{D_X}
\end{eqnarray*}
according to Lemmas \ref{xiDXY} and \ref{DXY} one gets
\begin{eqnarray}\label{II}
\textrm{II}&=&2(1-\t)(Y-X)+2(1-\t)(\sqrt{D_Y}-\sqrt{D_X})\nonumber
\\[2mm]
&=&2(1-\t)(1+\D)(Y-X)
\end{eqnarray}

Then substituting \eqref{I} and \eqref{II} with \eqref{Del} into
\eqref{FXY1} we obtain
\begin{eqnarray}\label{FXY2}
|\cf^{(+)}(\xb))_i-\cf^{(+)}(\yb)_i|_p&=&|\t-1|_p|\l(i)|_p\bigg|
(x_i-y_i)((\a+1)\t+X+\sqrt{D_X})\nonumber\\
&&+(1-x_i)(1+\D)(X-Y)\bigg|_p\nonumber \\
&\leq&
|\t-1|_p|\l(i)|_p\max\bigg\{|x_i-y_i|_p,|1+\D|_p|X-Y|_p\}\nonumber\\
&\leq&|\t-1|_p|\l(i)|_p\max_{i\geq 2}\{|x_i-y_i|\}\nonumber \\
&\leq& \d|\t-1|_p\|\xb-\yb\|,
\end{eqnarray}
here we have used \eqref{x1} and $|\D|_p=1$.

Consequently, from \eqref{FXY2} we get the required inequality.
\end{proof}

Now let us turn to $\cf^{(-)}$. This case is a little bit tricky.
Therefore, impose some extra conditions. Namely, we assume
\begin{eqnarray}\label{PT2}
&& |\a-1|_p\leq\frac{1}{p^2},\\\label{PT3} &&
|\t-1|_p=\frac{1}{p}.
\end{eqnarray}

\begin{lem}\label{inv1} Let the conditions \eqref{PT},\eqref{PT1},\eqref{PT2},\eqref{PT3} be satisfied.
Then $\cf^{(-)}(\Bb_{p^{-1}\d})\subset \Bb_{p^{-1}\d}$.
\end{lem}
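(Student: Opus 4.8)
The goal is to show $\cf^{(-)}(\Bb_{p^{-1}\d})\subset \Bb_{p^{-1}\d}$, i.e. that for every $\xb=\{x_n\}_{n\geq 2}$ with $\|\xb\|\leq p^{-1}\d$ we have $|(\cf^{(-)}(\xb))_i|_p\leq p^{-1}\d$ for all $i\geq 2$. The structure mirrors the proof of Lemma \ref{inv}, so I would first reestablish the $p$-adic estimates on the relevant quantities, but now using the stronger hypotheses \eqref{PT2} and \eqref{PT3}, and with the \emph{minus} branch of the square root. Recall from \eqref{F1} that
\begin{equation*}
(\cf^{(-)}(\xb))_i=\l(i)\,\frac{2(\theta-1)x_i+(\a-1)\t+X-\sqrt{D_X}+2}{(\a+1)\t+X-\sqrt{D_X}},\qquad i\geq 2,
\end{equation*}
so the key difference from the $(+)$ case is that the denominator now involves $-\sqrt{D_X}$. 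From \eqref{DX22} we already know $|\a\t-\sqrt{D_X}|_p=1$, so the denominator $(\a+1)\t+X-\sqrt{D_X}=(\a\t-\sqrt{D_X})+(\t+X)$ should have norm $1$, not the norm $|3|_p$ that appeared for the $(+)$ branch. This is precisely why the smaller ball $\Bb_{p^{-1}\d}$ works: the numerator now carries an extra factor of $|\t-1|_p=1/p$.

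\textbf{Key steps.}
First I would record, as in Lemma \ref{inv}, that $\|\xb\|\leq p^{-1}\d<1$ forces $|X|_p\leq p^{-1}\d$ and hence $|X+2|_p=1$, so $\sqrt{D_X}$ exists and the refined expansion $\sqrt{D_X}=2+\e$ with $|\e|_p<1$ from the proof of Lemma \ref{inv} remains valid; the strengthened conditions \eqref{PT2},\eqref{PT3} should in fact pin down $|\sqrt{D_X}-2|_p$ more sharply, which I would extract from \eqref{DX} by isolating the $(1-\a)$ and $(X+2)$ contributions. Second, I would estimate the denominator: writing it as $(\a\t-\sqrt{D_X})+(\t-1)+(1+X)$ and using \eqref{DX22} together with $|\t-1|_p=1/p$ and $|X|_p<1$, I expect $|(\a+1)\t+X-\sqrt{D_X}|_p=1$. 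Third, I would estimate the numerator $2(\t-1)x_i+(\a-1)\t+(X-\sqrt{D_X}+2)$: here $|(\t-1)x_i|_p\leq p^{-1}\cdot p^{-1}\d$, while $|(\a-1)\t|_p\leq p^{-2}$ by \eqref{PT2}, and the crucial term $X-\sqrt{D_X}+2=X-\e$ must be shown to have norm $\leq p^{-1}$. Putting these together gives $|(\cf^{(-)}(\xb))_i|_p\leq |\l(i)|_p\cdot p^{-1}\leq p^{-1}\d$, as required.

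\textbf{Main obstacle.}
The delicate point is controlling $|X-\sqrt{D_X}+2|_p=|X-\e|_p$, i.e. showing the numerator gains the extra factor $1/p$. Since $\sqrt{D_X}=2+\e$, I would go back to \eqref{DX} and solve for $\e$ perturbatively: $D_X=(2+\e)^2=4+4\e+\e^2$, and comparing with the second line of \eqref{DX} shows $4\e+\e^2=\t^2(1-\a)^2+2(2X+X\t+2)(1-\a)+(X+2)^2-4$. The right-hand side, expanded, should have every term of $p$-adic norm $\leq 1/p$ once \eqref{PT2} ($|\a-1|_p\leq p^{-2}$) and $|X|_p\leq p^{-1}\d$ are used, so that $|\e|_p\leq 1/p$ and, more precisely, $|X-\e|_p\leq 1/p$. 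This is where hypotheses \eqref{PT2} and \eqref{PT3} are genuinely needed, and where the bookkeeping is easy to get wrong; the analogous estimate in Lemma \ref{inv} did not require them precisely because the $(+)$ denominator had norm $|3|_p=1$ and no compensating factor in the numerator was necessary. Once the numerator bound is secured, the conclusion follows exactly as in Lemma \ref{inv}.
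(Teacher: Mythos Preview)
Your overall strategy is right, but there is a genuine error in the denominator estimate that breaks the argument as written.

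You claim $|(\a+1)\t+X-\sqrt{D_X}|_p=1$ by decomposing the denominator as $(\a\t-\sqrt{D_X})+(\t-1)+(1+X)$ and invoking \eqref{DX22}. However, the two terms of norm $1$ here \emph{cancel} modulo $p$: since $\a\equiv 1$, $\t\equiv 1$, $\sqrt{D_X}\equiv 2$, and $X\equiv 0$ modulo $p$, one has $(\a\t-\sqrt{D_X})+(1+X)\equiv (1-2)+1=0\pmod p$. In fact the denominator has norm exactly $|\t-1|_p=1/p$, not $1$. With the denominator at $1/p$ and your numerator bound of only $\leq 1/p$, the ratio is bounded by $1$, giving $|(\cf^{(-)}(\xb))_i|_p\leq |\l(i)|_p\leq\d$, which is not enough to land in $\Bb_{p^{-1}\d}$.

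The repair is to sharpen both estimates by one power of $p$. Note first that $\d\leq 1/p$ (the $|\l(i)|_p$ are non-Archimedean values strictly less than $1$), so $\xb\in\Bb_{p^{-1}\d}$ gives $|X|_p\leq p^{-2}$, not merely $\leq p^{-1}$. Combined with \eqref{PT2}, your own expansion of $D_X-4$ in the ``Main obstacle'' then yields $|\e|_p\leq p^{-2}$, hence $|X-\sqrt{D_X}+2|_p=|X-\e|_p\leq p^{-2}$ and the whole numerator has norm $\leq p^{-2}$. The paper organises this more transparently by reverting to the form $F_i$ involving $x_{-,1}$ rather than $\sqrt{D_X}$: from $\sqrt{D_X}=2+\e_1$ with $|\e_1|_p\leq p^{-2}$ and \eqref{x12} it deduces $|x_{-,1}+1|_p\leq p^{-2}$, whence the denominator $x_{-,1}+X+\t=(x_{-,1}+1)+X+(\t-1)$ has norm exactly $|\t-1|_p=1/p$ by \eqref{PT3}, and the numerator $(\t-1)x_i+(x_{-,1}+1)+X$ has norm $\leq p^{-2}$. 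Either route gives the required bound $|(\cf^{(-)}(\xb))_i|_p\leq p^{-1}|\l(i)|_p\leq p^{-1}\d$, but you must correct the denominator computation and push the numerator estimate one step further.
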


\begin{proof} Let $\xb\in \Bb_{p^{-1}\d}$. Then from \eqref{PT},\eqref{est} we have
$|X|_p\leq 1/p^2$. Using this with \eqref{PT2}, from \eqref{DX} one
finds that $\sqrt{D_X}=2+\e_1$, where $|\e_1|_p\leq 1/p^2$. This
with \eqref{x12} yields that
\begin{equation}\label{dx2}
|x_{-,1}+1|_p\leq \frac{1}{p^2}.
\end{equation}
Whence with \eqref{PT3} one finds
\begin{equation}\label{domin}
|x_{-,1}+X+\t|_p=|x_{-,1}+1+X+\t-1|_p=|\t-1|_p.
\end{equation}
Consequently, using \eqref{domin},\eqref{dx2} and \eqref{PT3} we
have
\begin{eqnarray*}
|(\cf^{(-)}(\xb))_i|_p&=&|\l(i)|_p\bigg|\frac{(\theta-1)x_i+x_{-,1}+X+1}
{x_{-,1}+X+\t}\bigg|_p\\
&\leq &\frac{1}{p}|\l(i)|_p\leq p^{-1}\d
\end{eqnarray*}
for all $i\geq 2$, which implies the assertion. \end{proof}

By the same argument of the proof of Theorem \ref{main} one can
prove

\begin{thm}\label{main1} Let the conditions \eqref{PT},\eqref{PT1},\eqref{PT2},\eqref{PT3} be satisfied.
Then one has
\begin{eqnarray}\label{fxy-}
\|\cf^{(-)}(\xb)-\cf^{(-)}(\yb)\|\leq \d\|\xb-\yb\|,
\end{eqnarray}
for every $\xb,\yb\in \Bb_{p^{-1}\d}$.
\end{thm}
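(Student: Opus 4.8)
The plan is to run the argument of Theorem~\ref{main} almost verbatim, the only---but decisive---difference being the size of the denominator. In the $(+)$-branch the denominator $(\a+1)\t+X+\sqrt{D_X}$ has norm $1$, whereas in the $(-)$-branch the extra hypotheses \eqref{PT2}, \eqref{PT3} force it to be small. Indeed, the computation already carried out in the proof of Lemma~\ref{inv1} (the inequalities \eqref{dx2} and \eqref{domin}) shows that for $\xb\in\Bb_{p^{-1}\d}$ one has $|x_{-,1}+1|_p\leq 1/p^2$, and hence $|(\a+1)\t+X-\sqrt{D_X}|_p=2|x_{-,1}+X+\t|_p=|\t-1|_p$. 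Tracking this factor is precisely what will remove the $|\t-1|_p$ appearing in \eqref{fxy+}.

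First I would fix $\xb,\yb\in\Bb_{p^{-1}\d}$, put the difference $(\cf^{(-)}(\xb))_i-(\cf^{(-)}(\yb))_i$ over the common denominator $\big((\a+1)\t+X-\sqrt{D_X}\big)\big((\a+1)\t+Y-\sqrt{D_Y}\big)$, and split the resulting numerator as $2(\t-1)\mathrm{I}^{(-)}+\mathrm{II}^{(-)}$, exactly as in \eqref{FXY1}, where now $\sqrt{D_X},\sqrt{D_Y}$ carry a minus sign and $\xi_X$ is replaced by $\xi^{(-)}_X=(\a-1)\t+X-\sqrt{D_X}+2$. Note that this denominator has norm $|\t-1|_p^2$.

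Next I would repeat the manipulations behind \eqref{I} and \eqref{II}. Lemma~\ref{DXY} gives $\mathrm{I}^{(-)}=(x_i-y_i)\big((\a+1)\t+X-\sqrt{D_X}\big)+x_i(1-\D)(Y-X)$, and the sign-flipped versions of the identities in Lemma~\ref{xiDXY}, together with $(\a+1)\t-((\a-1)\t+2)=2(\t-1)$, collapse $\mathrm{II}^{(-)}$ to $2(\t-1)(1-\D)(X-Y)$. Thus the numerator becomes
\begin{equation*}
2(\t-1)\Big[(x_i-y_i)\big((\a+1)\t+X-\sqrt{D_X}\big)+(1-x_i)(1-\D)(X-Y)\Big].
\end{equation*}
The crucial step is now the norm count: $(\a+1)\t+X-\sqrt{D_X}$ contributes $|\t-1|_p$, while $|1-\D|_p$ is small (at most $1/p^2$ on the region $|X|_p,|Y|_p\leq 1/p^2$, by \eqref{PT1} and \eqref{PT2}), so the bracket has norm at most $|\t-1|_p\|\xb-\yb\|$. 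Hence the numerator has norm at most $|\t-1|_p^2\|\xb-\yb\|$, which is cancelled exactly by the denominator of norm $|\t-1|_p^2$, leaving $|(\cf^{(-)}(\xb))_i-(\cf^{(-)}(\yb))_i|_p\leq|\l(i)|_p\|\xb-\yb\|\leq\d\|\xb-\yb\|$. Taking the maximum over $i\geq 2$ yields \eqref{fxy-}.

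I expect the main obstacle to be this norm accounting around the now-small denominator, rather than the algebra, which is a mechanical change of sign. One must verify that $(\a+1)\t+X-\sqrt{D_X}$ has norm exactly $|\t-1|_p$ and that $1-\D$ is small, so that the two hidden factors $(\t-1)$ in the numerator are matched against the two small factors of the denominator. This delicate balance, which rests essentially on the sharpened conditions \eqref{PT2} and \eqref{PT3} through $|x_{-,1}+1|_p\leq 1/p^2$ and $|\t-1|_p=1/p$, is exactly what downgrades the contraction constant from $\d|\t-1|_p$ in \eqref{fxy+} to $\d$ here, while keeping $\cf^{(-)}$ a contraction since $\d<1$.
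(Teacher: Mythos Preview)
Your proposal is correct and follows exactly the approach the paper intends: the paper itself gives no proof beyond the sentence ``By the same argument of the proof of Theorem~\ref{main} one can prove,'' and you have faithfully supplied those omitted details. The sign-flipped algebra, the identification $(\a+1)\t+X-\sqrt{D_X}=2(x_{-,1}+X+\t)$ with norm $|\t-1|_p$ via \eqref{domin}, and the bound $|1-\D|_p\leq 1/p^2$ from \eqref{PT2} and $\sqrt{D_X}=2+\e_1$ with $|\e_1|_p\leq 1/p^2$, are all correct and combine to cancel the two factors of $|\t-1|_p$ in the denominator against the two in the numerator, yielding the contraction constant $\d$.
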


Now we are ready to formulate our main result.

\begin{thm}\label{phase} Let the conditions
\eqref{PT},\eqref{PT1},\eqref{PT2},\eqref{PT3} be satisfied. Then a
phase transition occurs for the countable state $p$-adic Potts model
\eqref{Potts}.
\end{thm}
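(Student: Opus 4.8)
The plan is to produce two distinct generalized $p$-adic Gibbs measures by exhibiting two distinct translation-invariant solutions of the recursion equation \eqref{eq1}. Under the assumptions \eqref{PT},\eqref{PT1},\eqref{PT2},\eqref{PT3}, the groundwork has already been laid: we have split the problem via the two branches $x_{+,1}$ and $x_{-,1}$ of the quadratic $P(x)=0$ (see \eqref{x12}), and for each branch we have a candidate self-map $\cf^{(\pm)}$ on a closed ball in the Banach space $c_0$. The strategy is therefore a fixed-point argument for each branch separately, followed by verifying that the two resulting measures are genuinely different (so that $|G\cg(H)|\geq 2$).

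First I would apply the $p$-adic contraction mapping principle to $\cf^{(+)}$. By Lemma \ref{inv}, $\cf^{(+)}$ maps the closed ball $\Bb_{\d}$ into itself, and by Theorem \ref{main} it satisfies the Lipschitz estimate $\|\cf^{(+)}(\xb)-\cf^{(+)}(\yb)\|\leq \d|\t-1|_p\|\xb-\yb\|$; since $\d<1$ and $|\t-1|_p=|J|_p\cdot|(\t-1)/J|_p\le 1$ (indeed $|\t-1|_p=|J|_p<1$ by Lemma \ref{exp} applied to $\t=\exp_p(J)$), the Lipschitz constant is strictly less than $1$. As $c_0$ is a complete ultrametric (non-Archimedean Banach) space and $\Bb_{\d}$ is closed, the contraction principle gives a unique fixed point $\xb^{(+)}=\{x_i^{(+)}\}_{i\ge 2}\in\Bb_{\d}$. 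Together with the coordinate $x_{+,1}$ from \eqref{x12}, this yields a full translation-invariant solution $\hat\h^{(+)}$ of \eqref{eq1}, and hence, via the $p$-adic Kolmogorov extension theorem and the compatibility characterization of Theorem \ref{comp1}, a generalized $p$-adic Gibbs measure $\m^{(+)}$.

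Second I would run the identical argument for $\cf^{(-)}$ on the smaller ball $\Bb_{p^{-1}\d}$: Lemma \ref{inv1} gives invariance and Theorem \ref{main1} gives the contraction estimate with constant $\d<1$, so the contraction principle again produces a fixed point $\xb^{(-)}$, and together with $x_{-,1}$ a second translation-invariant solution $\hat\h^{(-)}$ and a measure $\m^{(-)}$. The decisive final step is to show $\m^{(+)}\neq\m^{(-)}$, and the mechanism for this is the separation already recorded in \eqref{x-1} and \eqref{x-2}: the first coordinates satisfy $|x_{+,1}-1|_p\leq 1/p$ whereas $|x_{-,1}-1|_p=1$ (equivalently $|x_{-,1}+1|_p\le p^{-2}$ by \eqref{dx2}), so the two solutions differ in their first coordinate, forcing $\hat\h^{(+)}\neq\hat\h^{(-)}$. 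Distinctness of the solution vectors yields distinct measures through the definition \eqref{mu} and the bijection between solutions and measures implicit in Theorem \ref{comp1}, so $|G\cg(H)|\geq 2$ and a phase transition occurs.

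The main obstacle, and the step deserving the most care, is precisely this last distinctness claim: exhibiting two fixed points of two different operators does not automatically guarantee two different \emph{measures}, so one must argue that the full solution vectors $\hat\h^{(\pm)}$ are distinct and that distinct solutions of \eqref{eq1} induce distinct probability measures via \eqref{mu}. The norm estimates \eqref{x-2} and \eqref{dx2} resolve the first point cleanly by separating the first coordinates; the second point follows because the map $\h\mapsto\m_\h$ is injective once the normalization through \eqref{H} and the partition function \eqref{ZN1} is accounted for. A minor technical caveat is to confirm that the two balls $\Bb_{\d}$ and $\Bb_{p^{-1}\d}$ are not empty and that the fixed points actually lie in $c_0$ (so that $X=\sum_{j\ge2}x_j$ converges, which is guaranteed by Lemma \ref{ser}), but these are immediate from the setup.
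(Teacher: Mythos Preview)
Your proposal is correct and follows essentially the same route as the paper: apply the contraction mapping principle to $\cf^{(+)}$ on $\Bb_\d$ (via Lemma~\ref{inv} and Theorem~\ref{main}) and to $\cf^{(-)}$ on $\Bb_{p^{-1}\d}$ (via Lemma~\ref{inv1} and Theorem~\ref{main1}), then argue the resulting solutions give distinct measures. The only variation is in the distinctness step: you separate the two solutions through their \emph{first} coordinates, invoking \eqref{x-1}, \eqref{x-2} (and \eqref{dx2}) to see $|x_{+,1}-1|_p\le 1/p$ while $|x_{-,1}-1|_p=1$, whereas the paper instead computes directly that $|x_{+,i}-x_{-,i}|_p=|\l(i)|_p$ for $i\ge 2$, obtaining $\|\xb_+-\xb_-\|=\d$. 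Your route is slightly more economical since the first-coordinate separation is already on the table; the paper's computation has the advantage of making the gap visible at every coordinate. In either version one should make explicit why distinct $\hat\h$'s force distinct measures---e.g.\ by noting that the ratio $\m^{(1)}_\h((0,1))/\m^{(1)}_\h((0,0))=\hat h_1/\t$ recovers $\hat h_1$ from $\m_\h$---a point both you and the paper leave implicit.
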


\begin{proof} From the conditions
\eqref{PT},\eqref{PT1},\eqref{PT2},\eqref{PT3} we infer that
statements of both Theorems \ref{main} and \ref{main1} are valid.
Noting that $\d<1$ with Theorem \ref{main} ( resp. Theorem
\ref{main1}) we can apply the fixed point theorem to $\cf^{(+)}$
(resp. $\cf^{(-)}$), which means that the existence of a unique
fixed point $\xb_+=\{x_{+,i}\}\in\Bb_{\d}$ (resp.
$\xb_-=\{x_{-,i}\}\in\Bb_{p^{-1}\d}$). Hence, equation \eqref{eq1}
has at least two translation-invariant solutions $(x_{+,1},\xb_+)$
and $(x_{-,1},\xb_-)$. These solutions according to Theorem
\ref{comp1} define $\m_+$ and $\m_-$ generalized $p$-adic Gibbs
measures, respectively. To show that such measures are different, it
is enough to establish that $\xb_+$ and $\xb_-$ are different.
Therefore, using \eqref{domin} one finds
\begin{eqnarray*}
|x_{+,i}-x_{-,i}|_p&=&|(\cf^{(+)}(\xb))_i-(\cf^{(-)}(\xb))_i|_p\\
&=&|\l(i)|_p\bigg|\frac{(\t-1)x_{+,i}+x_{+,1}+X+1}{x_{+,1}+X+\t}-
\frac{(\t-1)x_{-,i}+x_{-,1}+X+1}{x_{-,1}+X+\t}\bigg|_p\\
&=&|\l(i)|_p\frac{|\t-1|_p|x_{+,1}-x_{-,1}|_p|x_i-1|_p}{|x_{+,1}+X+\t|_p|x_{-,1}+X+\t|_p }\\
&=&|\l(i)|_p\frac{|\t-1|_p|x_{+,1}-x_{-,1}|_p}{|\t-1|_p }\\
&=&|\l(i)|_p|x_{+,1}-x_{-,1}|_p.
\end{eqnarray*}
From $|x_{+,1}-x_{-,1}|_p=|\sqrt{D_X}|_p=1$ we conclude that
$\|\xb_+-\xb_-\|=\d$, which means the measures $\m_+$ and $\m_-$ are
different.
\end{proof}

\begin{rem}  Note that if the conditions
\eqref{PT},\eqref{PT1} are  not satisfied, then it may exist only
one generalized translation-invariant $p$-adic Gibbs measure.
Indeed, consider weights defined by \eqref{l12} with
\begin{eqnarray}\label{ra}
&& |a|_p=1, \  \ \ \ |2a-1|_p\leq 1/p,\\
\label{rsq} && \sqrt{a} \ \ \textrm{does not exist in} \ \ \bq_p.
\end{eqnarray}
From Remark \ref{RGG(H)} we already knew that \eqref{tran} has a
solution $\hat h_1=0$ and $\hat h_2=-1$ which defines a
generalized translation-invariant $p$-adic Gibbs measure.

Now we show that equation \eqref{rq} does not have any solution
belonging to $\bq_p$. Indeed, one can compute that its discriminant
has a form
\begin{eqnarray*}
D=((1-\t)(1-a)+1-2a)^2+8a
\end{eqnarray*}
due to $|1-\t|_p\leq 1/p$ and \eqref{ra} one finds that
$|D|_p=|a|_p$. Hence, the assumption \eqref{rsq} with Lemma
\ref{sqrt} implies that $\sqrt{D}$ does not exist in $\bq_p$. This
means there is no solution of \eqref{rq} belonging to $\bq_p$.\\
\end{rem}

So, we have two different generalized $p$-adic Gibbs measures. It is
natural to ask: which of them would be a $p$-adic Gibbs measure?

Now recall that a translation-invariant generalized $p$-adic
measure associated with $\h=\{h_i\}\in\bq_p^\Phi$ would be
$p$-adic Gibbs one, if there is a sequence $\{\k_i\}\in\bq_p^\Phi$
such that the equality $h_i=\exp_p\k_i$ holds for all $i\in\Phi$.

Let us find the corresponding sequence $\{\k_i\}$ for
$(x_{+,1},\xb_+)$. From \eqref{H} we have
\begin{equation*}
\exp_p(\k_i-\k_0)\a=x_{+,i}, \ \ \ i\in\bn
\end{equation*}

Since $(x_{\pm,1},\xb_\pm)$ is a fixed point of \eqref{eq1},
therefore from \eqref{xx1} one gets
\begin{equation}\label{hx1}
\exp_p(\k_1-\k_0)=\frac{\t x_{+,1}+X_++1}{x_{+,1}+X_++\t}
\end{equation}
and
\begin{equation}\label{hx2}
\exp_p(\k_i-\k_0)=\frac{(\t-1)x_{+,i}+x_{+,1}+X_++1}{x_{+,1}+X_++\t},
\ \ \ i\geq 2,
\end{equation}
where as before
\begin{equation}\label{x+-}
X_\pm=\sum_{j=2}^\infty x_{\pm,j}.
\end{equation}
 By means of \eqref{x-1} and Lemma \ref{inv} one gets
\begin{eqnarray*}
&& \bigg|\frac{\t x_{+,1}+X_++1}{x_{+,1}+X_++\t}\bigg|_p=1, \\
&&\bigg|\frac{\t
x_{+,1}+X+1}{x_{+,1}+X+\t}-1\bigg|_p=|\t-1|_p|x_{+,1}-1|_p<
1/p,\\
&&\bigg|\frac{(\t-1)x_{+,i}+x_{+,1}+X_++1}{x_{+,1}+X_++\t}\bigg|_p=1,\\
&&\bigg|\frac{(\t-1)x_{+,i}+x_{+,1}+X_++1}{x_{+,1}+X_++\t}-1\bigg|_p=|\t-1|_p|x_{+,i}-1|_p=|\t-1|_p\leq
1/p,
\end{eqnarray*}
which, thanks to Lemma \ref{exp}, allow us to take $\log_p$ from
both sides of \eqref{hx1} and \eqref{hx2}. Hence putting $\k_0=0$,
we able to find $\{\k_i\}$. This means that $\m_+$ is a $p$-adic
Gibbs measure.

Now turn to $(x_{-,1},\xb_-)$. From \eqref{PT3},\eqref{domin} and
\eqref{x-2} we have
\begin{eqnarray*}
\bigg|\frac{(\t-1)x_{-,i}+x_{-,1}+X_-+1}{x_{-,1}+X_-+\t}\bigg|_p\leq
1/p.
\end{eqnarray*}
This due to Lemma \ref{exp} implies that $\xb_-$ can not be
represented as $\exp_p\k_i$. Therefore, $\m_-$ is a strictly
generalized $p$-adic Gibbs measure.

So, we have

\begin{thm}\label{phase1} Assume all the conditions of Theorem \ref{phase} are satisfied.
Then $\m_+$ is a $p$-adic Gibbs measure, but $\m_-$ is a
generalized $p$-adic Gibbs measure.
\end{thm}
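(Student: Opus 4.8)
The plan is to invoke the criterion recalled just above the theorem: a translation-invariant generalized $p$-adic Gibbs measure attached to $\h=\{h_i\}$ belongs to $\cg(H)$ exactly when every $h_i$ can be written as $\exp_p(\k_i)$. Normalizing $h_0=1$ (i.e. $\k_0=0$) and combining \eqref{H} with the fixed point equation \eqref{eq1}, this is equivalent to requiring that each $h_i=\hat h_i/\l(i)$ lie in the range of $\exp_p$; at the fixed point these numbers are exactly $F^{(\pm)}_i(\xb_\pm;\t)$, which by \eqref{F1} and \eqref{x12} reduce to $\frac{(\t-1)x_{\pm,i}+x_{\pm,1}+X_\pm+1}{x_{\pm,1}+X_\pm+\t}$, with $X_\pm$ as in \eqref{x+-}. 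By Lemma \ref{exp} the range of $\exp_p$ is the ball $B(1,p^{-1/(p-1)})$, on which $\log_p$ inverts it, while $|\exp_p(\cdot)|_p\equiv 1$. So the whole problem reduces to computing the $p$-adic absolute values of these ratios.

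First I would handle $\m_+$. Using $|x_{+,1}-1|_p\leq 1/p$ from \eqref{x-1}, $|X_+|_p\leq\d<1$ from \eqref{est}, and $|\t-1|_p\leq 1/p$, the denominator $x_{+,1}+X_++\t$ differs from $2$ by a quantity of norm $\le 1/p$, hence has norm $1$ since $p\geq 3$. A one-line computation gives $F^{(+)}_i-1=(\t-1)(x_{+,i}-1)/(x_{+,1}+X_++\t)$ for all $i\geq 1$ (reading $x_{+,i}$ as $x_{+,1}$ when $i=1$). Since $|x_{+,i}|_p\leq|\l(i)|_p<1$ forces $|x_{+,i}-1|_p=1$ for $i\geq 2$, while $|x_{+,1}-1|_p\le 1/p$ is even smaller, I obtain $|F^{(+)}_i|_p=1$ and $|F^{(+)}_i-1|_p\leq 1/p<p^{-1/(p-1)}$ for every $i\geq 1$. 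Lemma \ref{exp} then lets me set $\k_i=\log_p F^{(+)}_i$, which furnishes the required sequence, so $\m_+\in\cg(H)$.

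Next I would treat $\m_-$, where the same ratios slip off the unit sphere. From \eqref{PT2} and the bound $|X_-|_p\leq 1/p^2$ available on $\Bb_{p^{-1}\d}$, formula \eqref{DX} gives $\sqrt{D_X}=2+\e_1$ with $|\e_1|_p\leq 1/p^2$, whence \eqref{x12} produces $|x_{-,1}+1|_p\leq 1/p^2$, that is \eqref{dx2}. Together with \eqref{PT3} the denominator $x_{-,1}+X_-+\t=(x_{-,1}+1)+X_-+(\t-1)$ is dominated by $\t-1$ and so has norm $|\t-1|_p=1/p$, namely \eqref{domin}; meanwhile the numerator $(\t-1)x_{-,i}+(x_{-,1}+1)+X_-$ has norm at most $1/p^2$. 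Hence $|F^{(-)}_i|_p\leq 1/p<1$ for every $i\geq 2$, and since $|\exp_p(\cdot)|_p\equiv 1$ none of these can equal $\exp_p(\k_i)$. Therefore $\m_-$ fails the criterion and is a strictly generalized $p$-adic Gibbs measure.

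The norm bookkeeping is routine; the only real obstacle, and the crux of the statement, is the sign dichotomy already encoded in \eqref{x-1}--\eqref{x-2}. On the ``$+$'' branch $x_{+,1}$ lies near $1$, keeping the denominator near $2$, so every ratio stays on the unit sphere within $1/p$ of $1$ and is thus in the image of $\exp_p$; on the ``$-$'' branch $x_{-,1}$ lies near $-1$, collapsing the denominator to norm $|\t-1|_p$ while the numerator collapses faster still, driving the ratios strictly inside the unit disc. The reinforced hypotheses \eqref{PT2}--\eqref{PT3} are precisely what guarantees this norm gap for the ``$-$'' solution, placing the two measures on opposite sides of the $p$-adic Gibbs boundary.
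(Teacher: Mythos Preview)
Your argument is correct and follows essentially the same route as the paper: you identify $h_i$ with the ratios $F^{(\pm)}_i$, verify via \eqref{x-1} that on the ``$+$'' branch each ratio lies in $B(1,1/p)\subset\operatorname{im}\exp_p$ (using the clean identity $F^{(+)}_i-1=(\t-1)(x_{+,i}-1)/(x_{+,1}+X_++\t)$), and then use \eqref{dx2}--\eqref{domin} to show that on the ``$-$'' branch the ratios have norm $\le 1/p$ and hence fall outside the image of $\exp_p$. The paper carries out exactly these norm computations, so there is nothing to add.
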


In Theorem \ref{1} we have provided a sufficient condition on
uniqueness of the $p$-adic Gibbs measure. But nevertheless, it is
interesting to know whether the measure $\m_+$ is a unique $p$-adic
Gibbs measure.

\begin{thm}\label{uni4} Let the conditions \eqref{PT},\eqref{PT1} be satisfied for
$\l$. Then the measure $\m_+$ is a unique $p$-adic Gibbs measure for
$p$-adic Potts model \eqref{Potts}, i.e. $|\cg(H)|=1$.
\end{thm}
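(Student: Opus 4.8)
The plan is to reinterpret the $p$-adic Gibbs property as a constraint that confines any corresponding solution of \eqref{eq1} to an explicit ball, and then to run a backward contraction argument on the full recursion. First I would translate the defining condition. A $p$-adic Gibbs measure corresponds to a solution $\{\hat\h_n\}_{n\in\bn}$ of \eqref{eq1} for which $h_{i,n}=\exp_p(\k_{i,n})$. Normalizing $\k_{0,n}=0$ and using \eqref{H} with $\l(0)=1$, this forces $\hat h_{i,n}=\exp_p(\k_{i,n})\l(i)$, so by Lemma \ref{exp} one has $|\hat h_{i,n}|_p=|\l(i)|_p$ for all $i,n$; for $i=1$ this reads $|\hat h_{1,n}-\a|_p\le 1/p$, which together with \eqref{PT1} gives $|\hat h_{1,n}-1|_p\le 1/p$. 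Hence every $p$-adic Gibbs measure yields a solution lying, at every level $n$, in the set
\[
\cu=\{\hat\h\in\bq_p^{\bn}:\ |\hat h_1-1|_p\le 1/p,\ \ |\hat h_i|_p\le|\l(i)|_p\ (i\ge 2)\}.
\]

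Next I would introduce the full recursion operator $T$ attached to \eqref{eq1}, namely $(T(\hat\h))_i=\l(i)F_i(\hat\h;\t)$ for $i\ge 1$, and show $T(\cu)\subset\cu$. Writing $S=\sum_{j\ge 1}\hat h_j$, one checks that for $\hat\h\in\cu$ the denominator is a unit, $|S+\t|_p=1$, since $|S-1|_p\le 1/p$, $|\t-1|_p\le 1/p$ and $|2|_p=1$ (using $p\ge 3$). The tail coordinates satisfy $|(T(\hat\h))_i|_p=|\l(i)|_p|F_i|_p\le|\l(i)|_p$ exactly as in Lemma \ref{inv}, while for the first coordinate a direct computation gives
\[
(T(\hat\h))_1-1=\frac{(\t-1)(\a\hat h_1-1)+(\a-1)(S+1)}{S+\t},
\]
whose numerator has norm $\le 1/p$ because $|\a\hat h_1-1|_p\le 1/p$, $|\a-1|_p\le 1/p$ and $|S+1|_p=1$. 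Thus $|(T(\hat\h))_1-1|_p\le 1/p$ and $T$ preserves $\cu$.

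The main estimate — and the step I expect to be the real obstacle — is that $T$ is a contraction on $\cu$, in particular that the coupling through the first coordinate does not spoil the bound. For $\hat\h,\hat\h'\in\cu$ with $S'=\sum_{j\ge1}\hat h_j'$, the key algebraic identity, obtained as in Theorem \ref{main}, is
\[
F_i(\hat\h;\t)-F_i(\hat\h';\t)=\frac{(\t-1)\big[(S-S')(1-\hat h_i)+(\hat h_i-\hat h_i')(S+\t)\big]}{(S+\t)(S'+\t)},\qquad i\ge 1,
\]
which isolates the explicit factor $(\t-1)$. Since both denominators are units, $|1-\hat h_i|_p\le 1$, and $|S-S'|_p\le\|\hat\h-\hat\h'\|$, $|\hat h_i-\hat h_i'|_p\le\|\hat\h-\hat\h'\|$, one gets $|F_i(\hat\h;\t)-F_i(\hat\h';\t)|_p\le|\t-1|_p\,\|\hat\h-\hat\h'\|$. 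Multiplying by $|\l(i)|_p\le 1$ (for $i=1$, $|\l(1)|_p=1$; for $i\ge 2$, $|\l(i)|_p\le\d$) yields $\|T(\hat\h)-T(\hat\h')\|\le|\t-1|_p\,\|\hat\h-\hat\h'\|$ with $|\t-1|_p\le 1/p<1$ by \eqref{J} and Lemma \ref{exp}.

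Finally I would conclude by a backward-iteration argument, which avoids any appeal to non-Archimedean compactness. If $\{\hat\h_n\}$ and $\{\hat\h'_n\}$ are two solutions in $\cu$, then $\hat\h_n=T(\hat\h_{n+1})$ gives, upon iterating, $\|\hat\h_n-\hat\h'_n\|\le|\t-1|_p^{\,k}\,\|\hat\h_{n+k}-\hat\h'_{n+k}\|\le|\t-1|_p^{\,k}\,\mathrm{diam}(\cu)$ for every $k$; since $\mathrm{diam}(\cu)\le\max\{1/p,\d\}<1$, letting $k\to\infty$ forces $\hat\h_n=\hat\h'_n$ for all $n$. Hence there is at most one solution of \eqref{eq1} inside $\cu$. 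As the translation-invariant solution $(x_{+,1},\xb_+)$ lies in $\cu$ (by \eqref{x-1} and Lemma \ref{inv}, $|x_{+,1}-1|_p\le 1/p$ and $|x_{+,i}|_p=|\l(i)|_p$) and defines the $p$-adic Gibbs measure $\m_+$, it is the unique such solution, so every $p$-adic Gibbs measure coincides with $\m_+$ and $|\cg(H)|=1$.
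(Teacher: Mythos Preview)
Your argument is correct and follows essentially the same route as the paper's own proof: both confine any $p$-adic Gibbs solution to a region where $|\hat h_{1,n}-1|_p\le 1/p$ and $|\hat h_{i,n}|_p\le|\l(i)|_p$, extract the common factor $(\t-1)$ in the difference $F_i(\hat\h;\t)-F_i(\hat\h';\t)$ to obtain a contraction with ratio $|\t-1|_p$, and then iterate backwards along the recursion to force equality. The only cosmetic difference is that the paper compares an arbitrary Gibbs solution directly against the known fixed point $(x_{+,1},\xb_+)$, whereas you compare two arbitrary solutions in $\cu$ and identify the unique survivor afterwards; your explicit introduction of the set $\cu$ and verification of $T(\cu)\subset\cu$ is a clean packaging of steps the paper leaves implicit.
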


\begin{proof} Let $\gh=\{\hat
h_{1,n},\hat\h_{n}\}$ be any solution of \eqref{eq1}, where
$\hat\h_{n}=\{\hat h_{i,n}\}_{i\geq 2}$ and $\hat
h_{i,n}=\l(i)\exp(\k_{i,n})$, $i\in\bn$. We will show that such a
solution coincides with $(x_{+,1},\xb_+)$. Indeed, it is clear that
$|\hat h_{1,n}|_p=1$ and $\|\hat \h_n\|=\d$.

Let us fix $n\in\bn$ and consider the difference
\begin{eqnarray}\label{uni1}
|\hat h_{1,n}-x_{+,1}|_p&=&|a|\bigg| \frac{\t \hat
h_{1,n+1}+H_{n+1}+1}{\hat h_{1,n+1}+H_{n+1}+\t}- \frac{\t
x_{+,1}+X_++1}{x_{+,1}+X_++\t}\bigg|_p\nonumber \\
&=&|\t-1|_p|(\hat h_{1,n+1}-1)(X_+-H_{n+1})+(\hat h_{1,n+1}-x_{+,1})(\t+1+H_{n+1})|_p\nonumber \\
&\leq&|\t-1|_p\max\{|\hat h_{1,n+1}-x_{+,1}|_p,\|\hat
\h_{n+1}-\xb_+\|\},
\end{eqnarray}
here we have used that
$$
H_{n+1}:=\sum_{k=2}^\infty \hat h_{k,n+1}
$$
and $|H_{n+1}|_p\leq\|\hat\h_{n+1}\|<1$.

Similarly reasoning as in the proof of Theorem \ref{main} one gets
\begin{eqnarray}\label{uni2}
|\hat h_{i,n}-x_{+,i}|_p&=&|\l(i)|_p|F_i(\hat h_{1,n+1},\hat
\h_{n+1};\t)-F_i(x_{+,1},\xb_{+};\t)|_p\nonumber\\
&\leq &|\l(i)|_p|\t-1|_p\max\{|\hat h_{1,n+1}-x_{+,1}|_p,\|\hat
\h_{n+1}-\xb_+\|\}
\end{eqnarray}
From \eqref{uni1},\eqref{uni2} we obtain
\begin{eqnarray}\label{uni3}
\max\{|\hat h_{1,n}-x_{+,1}|_p,\|\hat \h_{n}-\xb_+\|\}\leq|\t-1|_p
\max\{|\hat h_{1,n+1}-x_{+,1}|_p,\|\hat \h_{n+1}-\xb_+\|\}
\end{eqnarray}

Now take an arbitrary $\e>0$ and $n_0\in\bn$ such that
$|\t-1|_p^{n_0}<\e$. Then iterating \eqref{uni3} $n_0$ times, one
gets
\begin{eqnarray*}
\max\{|\hat h_{1,n}-x_{+,1}|_p,\|\hat
\h_{n}-\xb_+\|\}\leq|\t-1|_p^{n_0}<\e
\end{eqnarray*}
Due to the arbitrariness of $\e$ we have $h_{1,n}=x_{+,1}$ and
$\hat\h_n=\xb_+$ for every $n\in \bn$.
\end{proof}

\begin{rem}\label{Rl1} The proved Theorem \ref{uni4} indicates that the condition \eqref{l1}
was a sufficient for the uniqueness of the $p$-adic Gibbs measure.
\end{rem}

Now let us turn to $\m_-$. Take any solution $\gh_-$ of \eqref{eq1}
of the form $\gh_-=\{x_{-,1},\hat\h^{(-)}_{n}\}$), where
$\hat\h_n^{(-)}\in\Bb_{p^{-1}\d}$. As a consequence of Theorem
\ref{main1} one can formulate the following

\begin{cor}\label{uniq} Assume that the conditions of Theorem \ref{phase} are satisfied.
Then $\gh_-$ coincides with $(x_{-,1},\xb_-)$.
\end{cor}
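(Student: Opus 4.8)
The plan is to establish Corollary \ref{uniq} by repeating, in a streamlined form, the contraction-iteration argument from the proof of Theorem \ref{uni4}, now exploiting the sharper Lipschitz constant furnished by Theorem \ref{main1}. The key observation is that once the first coordinate of the solution is pinned at the value $x_{-,1}$, the tail part of the recursion \eqref{eq1} is governed precisely by the operator $\cf^{(-)}$ of \eqref{F}. Concretely, recalling $\l(0)=1$ from \eqref{PT}, equation \eqref{eq1} for the indices $i\geq 2$ reads $\hat h_{i,n}=\l(i)F_i(\gh_-;\t)$, and with the first slot equal to $x_{-,1}$ this is exactly $(\cf^{(-)}(\hat\h^{(-)}_{n+1}))_i$.

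First I would verify the fixed-point-type relation $\hat\h^{(-)}_n=\cf^{(-)}(\hat\h^{(-)}_{n+1})$ for every $n\in\bn$. Here the conditions \eqref{PT2}, \eqref{PT3} together with the hypothesis $\hat\h^{(-)}_n\in\Bb_{p^{-1}\d}$ play the decisive role: exactly as in the proof of Lemma \ref{inv1}, the smallness $|X|_p\leq 1/p^2$ forces $\sqrt{D_X}=2+\e_1$ with $|\e_1|_p\leq 1/p^2$, which pins the relevant root of $P$ to the negative branch $x_{-,1}$ and ensures that inserting $x_{-,1}$ into $F_i$ reproduces $F_i^{(-)}$ of \eqref{F1}. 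This is the step I expect to demand the most care, since one must check that the branch of the square root is consistently the \emph{same} one at every level of the recursion; the containment $\hat\h^{(-)}_n\in\Bb_{p^{-1}\d}$ is precisely what excludes an accidental switch to the $+$ branch, and this is why the smaller ball and the extra conditions \eqref{PT2}, \eqref{PT3} are needed in the $(-)$ case.

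Next I would recall that, by Theorem \ref{main1} and the fixed point theorem (as applied in the proof of Theorem \ref{phase}), the translation-invariant tail $\xb_-$ is the unique fixed point of $\cf^{(-)}$ in $\Bb_{p^{-1}\d}$, so that $\xb_-=\cf^{(-)}(\xb_-)$. Combining this with the relation above and the estimate \eqref{fxy-} yields
\begin{equation*}
\|\hat\h^{(-)}_n-\xb_-\|=\|\cf^{(-)}(\hat\h^{(-)}_{n+1})-\cf^{(-)}(\xb_-)\|\leq \d\,\|\hat\h^{(-)}_{n+1}-\xb_-\|
\end{equation*}
for every $n\in\bn$.

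Finally I would iterate this inequality $m$ times to obtain $\|\hat\h^{(-)}_n-\xb_-\|\leq \d^m\|\hat\h^{(-)}_{n+m}-\xb_-\|$. Since both $\hat\h^{(-)}_{n+m}$ and $\xb_-$ lie in $\Bb_{p^{-1}\d}$, the strong triangle inequality bounds the last norm by $p^{-1}\d$ uniformly in $m$, while $\d<1$ gives $\d^m\to 0$ as $m\to\infty$. Hence $\|\hat\h^{(-)}_n-\xb_-\|=0$, that is, $\hat\h^{(-)}_n=\xb_-$ for all $n$, and therefore $\gh_-$ coincides with $(x_{-,1},\xb_-)$, as claimed.
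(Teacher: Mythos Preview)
Your proposal is correct and follows essentially the same route as the paper: identify $\hat\h^{(-)}_n=\cf^{(-)}(\hat\h^{(-)}_{n+1})$, invoke the Lipschitz estimate \eqref{fxy-} from Theorem \ref{main1} to get $\|\hat\h^{(-)}_n-\xb_-\|\leq\d\|\hat\h^{(-)}_{n+1}-\xb_-\|$, and iterate using $\d<1$. Your discussion of branch consistency and the explicit uniform bound $\|\hat\h^{(-)}_{n+m}-\xb_-\|\leq p^{-1}\d$ are details the paper leaves implicit, but the argument is the same.
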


\begin{proof} Now fix any vertex $n\in\bn$ and $i\geq 2$. From \eqref{fxy-}  one gets
\begin{eqnarray}\label{hxx}
|\hat h^{(-)}_{i,n}-x_{-,i}|_p&=&|\l(i)|_p\bigg|F^{(-)}_i(\hat\h^{(-)}_{n+1},\t)-F^{(-)}_i(\xb_-,\t)\bigg|_p\nonumber\\
&\leq& \d \|\hat\h^{(-)}_{n+1}-\xb_-\|
\end{eqnarray}

Take an arbitrary $\e>0$ and $n_0\in\bn$ such that $\d^{n_0}<\e$.
Then \eqref{hxx} implies that
\begin{eqnarray*}
\|\hat\h^{(-)}_n-\xb_-\|&\leq&
\d\|\hat\h^{(-)}_{n+1}-\xb_-\|\leq\cdots \leq
\d^{n_0}\|\hat\h^{(-)}_{n+n_0}-\xb_-\|<\e.
\end{eqnarray*}
Hence, from the arbitrariness of $\e$ we obtain
$\hat\h^{(\pm)}_n=\xb_-$ for every $n\in \bn$.
 This proves the assertion.
\end{proof}

Now let us consider more concrete examples.

\begin{ex}\label{ex2} Assume that  $\{\hat h_m\}$ is a
solution of \eqref{tran} defined by
\begin{equation}\label{sol}
\hat h_m=p^{m+1}-\frac{J^{m-1}}{(m-1)!}, \ \ \ m\in\bn,
\end{equation}
where $|J|_p=1/p$. Then one can see
$$
\sum_{m=1}^\infty \hat h_m=\frac{p^2}{1-p}-\t,
$$
here as before $\t=\exp_p(J)$. Now substituting \eqref{sol} to
\eqref{tran} we obtain the corresponding $\l$ by
\begin{equation}\label{ex0}
\frac{\l(m)}{\l(0)}=\frac{J^{m-1}}{(m-1)!}\bigg(\frac{p^2}{(p-1)(\t-1)(\frac{J^{m-1}}{(m-1)!}-1)+p^2}\bigg).
\end{equation}
Put $\l(0)=1$. Then it is clear that $\l(1)=1$. The equality
$|\t-1|_p=|J|_p=1/p$ implies that
\begin{eqnarray}\label{ex01}
\bigg|\frac{p^2}{(p-1)(\t-1)(\frac{J^{m-1}}{(m-1)!}-1)+p^2}\bigg|_p
=\frac{1}{p^2|\t-1|_p}=\frac{1}{p}.
\end{eqnarray}
This with $|J^{m-1}/(m-1)!|_p<1$ yields that the conditions of
Theorem \ref{phase} are satisfied. Now keeping in mind that $\hat
h_m$ is a solution of \eqref{tran} and by means of
\eqref{H},\eqref{ex01} we find that
\begin{eqnarray}\label{ex}
\bigg|\frac{h_m}{h_0}\bigg|_p&=&\bigg|\frac{(p-1)(\t-1)(\frac{J^{m-1}}{(m-1)!}-1)+p^2}{p^2}\bigg|_p=p
\end{eqnarray}
for every $m\geq 2$. Hence, \eqref{ex} shows that
$|\frac{h_m}{h_0}|_p>1$ which means that the equality
$h_m=\exp_p(\k_m)$ impossible for any $\k_m$. This implies that
the corresponding measure belongs to $G\cg(H)\setminus\cg(H)$.
From Corollary \ref{uniq} we infer that the constructed
generalized $p$-adic Gibbs measure is $\m_-$. Thanks to Theorem
\ref{uni4} for the weights \eqref{ex0} there is also a unique
$p$-adic Gibbs measure.
\end{ex}

\begin{ex}\label{ex3} Now suppose that  $\{\hat h_m\}$ is a solution of \eqref{tran}
defined by $\hat h_m=\frac{J^{m-1}}{(m-1)!}$, $m\in\bn$, with
$|J|_p\leq 1/p$. Then $ \sum_{m=1}^\infty \hat h_m=\t$. By the
same argument used in Example \ref{ex2} one can define $\l$, for
which the conditions \eqref{PT},\eqref{PT1} are satisfied as well.
In this case one can show that $|\frac{h_m}{h_0}|_p=1$. Hence,
according to Theorem \ref{uni4} the corresponding measure is a
unique $p$-adic Gibbs measure.
\end{ex}

Now let us turn to the boundedness of the measures $\m_+$ and
$\m_-$. We need the following

\begin{lem} Let $\gh$ be a translation-invariant solution of \eqref{eq1}, and
$\m_\gh$ be the associated Gibbs measure. Then for the
corresponding partition function $Z^{(\gh)}_n$ (see \eqref{ZN1})
the following equality holds
\begin{equation}\label{ZN2}
Z^{(\gh)}_{n+1}=A_{\gh}Z^{(\gh)}_n,
\end{equation}
where
\begin{equation}\label{aN}
A_{\gh}=\l(0)\bigg(\t+\sum_{j=1}^\infty\hat h_j\bigg).
\end{equation}
\end{lem}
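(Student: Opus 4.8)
The plan is to extract the recursion from the compatibility condition \eqref{comp} rather than to expand $Z^{(\gh)}_{n+1}$ head-on as a sum over $\Om_{n+1}$. Since $\gh$ is a solution of \eqref{eq1}, Theorem \ref{comp1} guarantees that the measures $\m^{(n)}_\gh$ satisfy \eqref{comp}; the point I would exploit is that \eqref{comp}, read for a single fixed configuration, already carries the ratio $Z^{(\gh)}_{n+1}/Z^{(\gh)}_{n}$, so that only one summation over the last spin is needed.

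First I would record the factorization of the statistical weight. Writing $W_n(\s)=\exp_p\{H_n(\s)\}\prod_{k=0}^n\l(\s(k))$, the additivity $H_{n+1}(\s_n\vee\w)=H_n(\s_n)+J\delta_{\s_n(n),\w}$ coming from \eqref{Potts}, together with $\exp_p(J\delta)=\t^{\delta}$ (the homomorphism property of $\exp_p$ on its convergence ball, legitimate by \eqref{J}), gives $W_{n+1}(\s_n\vee\w)=W_n(\s_n)\,\t^{\delta_{\s_n(n),\w}}\l(\w)$. Substituting this and the definition \eqref{mu} into \eqref{comp} and cancelling the common factor $W_n(\s_n)$, translation invariance ($h_{i,n}=h_i$ for all $n$) leaves, for any $\s_n$ with $\s_n(n)=i$, the scalar identity
$$\frac{1}{Z^{(\gh)}_{n+1}}\sum_{\w\in\Phi}\t^{\delta_{i,\w}}\l(\w)h_\w=\frac{h_i}{Z^{(\gh)}_{n}}.$$

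Next I would specialise to $i=0$, which makes the right-hand side clean and lets me read off $A_\gh$ at once. Splitting off the diagonal term, $\sum_{\w\in\Phi}\t^{\delta_{0,\w}}\l(\w)h_\w=(\t-1)\l(0)h_0+\sum_{\w\in\Phi}\l(\w)h_\w$, and dividing through by $h_0$ (nonzero, as is implicit in \eqref{H}), I get $Z^{(\gh)}_{n+1}/Z^{(\gh)}_{n}=(\t-1)\l(0)+h_0^{-1}\sum_{\w\in\Phi}\l(\w)h_\w$. It then remains to convert the last sum via \eqref{H}: since $\l(j)h_j/h_0=\l(0)\hat h_j$, the terms $\w\ge1$ contribute $\l(0)\sum_{j\ge1}\hat h_j$ while the $\w=0$ term contributes $\l(0)$, so collecting everything yields $(\t-1)\l(0)+\l(0)+\l(0)\sum_{j\ge1}\hat h_j=\l(0)\big(\t+\sum_{j\ge1}\hat h_j\big)=A_\gh$, which is exactly \eqref{ZN2}.

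The computation is routine; the only care needed is in the weight factorization and in justifying the manipulation of the infinite $\w$-sum. Convergence of $\sum_{\w\in\Phi}\l(\w)h_\w$ is not an obstacle: $\gh$ being a translation-invariant solution, Lemma \ref{inf} (with Remark \ref{R2} and \eqref{L}) gives $\sum_{j\ge1}\hat h_j<\infty$, hence $\l(j)h_j=\l(0)h_0\hat h_j\to0$ and Lemma \ref{ser} applies. The one genuinely structural point — and my reason for routing through \eqref{comp} — is that a frontal split of $Z^{(\gh)}_{n+1}$ over $\Om_{n+1}$ produces $\sum_{\s_n}W_n(\s_n)\big[(\t-1)\l(\s_n(n))h_{\s_n(n)}+\sum_{\w}\l(\w)h_\w\big]$, which does not collapse to a multiple of $Z^{(\gh)}_n$ without re-invoking the fixed-point equation; choosing $i=0$ in the compatibility relation sidesteps this entirely.
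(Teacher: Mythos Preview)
Your proof is correct. It differs from the paper's in organization rather than in substance. The paper first records the transfer-eigenvalue identity
\[
\sum_{j\in\Phi}\exp_p\{J\delta_{ij}\}h_j\lambda(j)=A_{\gh}h_i \qquad (\textrm{for every } i\in\Phi),
\]
which is just a rephrasing of the translation-invariant form of \eqref{eq1}, and then evaluates the global normalization $1=\sum_{\Om_{n+1}}\mu^{(n+1)}_{\gh}$ by peeling off the last spin and invoking this identity at every value of $\sigma(n)$; the inner sum collapses to $A_{\gh}h_{\sigma(n)}$ and the remaining sum reconstitutes $Z^{(\gh)}_n$. You instead fix a single configuration with terminal spin $0$, apply the compatibility relation \eqref{comp} at that configuration, cancel the common weight, and read off $Z^{(\gh)}_{n+1}/Z^{(\gh)}_{n}$ directly. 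Both routes rest on the same scalar identity; the paper uses it for all $i$ inside a full sum over $\Om_n$, you only for $i=0$. Your path is marginally more economical (no sum over $\Om_n$ to re-assemble, and the fixed-point equation enters only via Theorem~\ref{comp1}), while the paper's makes the role of $A_{\gh}$ as a transfer eigenvalue more visibly explicit.
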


\begin{proof} From \eqref{eq1} we conclude that there is a constant
$A_{\gh}\in\bq_p$ such that
\begin{equation}\label{aN1}
\sum_{j\in\Phi}\exp_p\{J\d_{ij}\}h_{j}\l(j)=A_{\gh}h_i
\end{equation}
for any $i\in\Phi$.

On the other hand, using \eqref{mu} and \eqref{aN1} we have
\begin{eqnarray*}
1&=&\sum_{\s\in\Om_n}\sum_{\w\in\Phi}\m^{(n+1)}_\gh(\s\vee\w)\\
&=&\sum_{\s\in\Om_n}\sum_{\w\in\Phi}\frac{1}{Z^{(\gh)}_n}\exp_p\{H(\s\vee\w)\}h_\w\prod_{k=0}^{n}\l(\s(k))\l(\w)\\
&=&\sum_{\s\in\Om_n}\frac{1}{Z^{(\gh)}_{n+1}}\exp_p\{H(\s)\}\prod_{k=0}^{n}\l(\s(k))
\sum_{j\in\Phi}\exp_p\{J\d_{\s(n),j}\}h_j\l(j)\\
&=&\frac{A_{\gh}}{Z^{(\gh)}_{n+1}}\sum_{\s\in\Om_n}\exp_p\{H(\s)\}h_{\s(n)}\prod_{k=0}^{n}\l(\s(k))\\
&=&\frac{A_{\gh}}{Z^{(\gh)}_{n+1}}Z^{(\gh)}_n
\end{eqnarray*}
which implies \eqref{ZN2}. From \eqref{aN1} we may easily find
\eqref{aN}.
\end{proof}

Now we are ready to formulate a result.

\begin{thm}\label{phase2} Assume all the conditions of Theorem \ref{phase} are satisfied. Then
the $p$-adic Gibbs measure $\m_+$ is bounded, but the generalized
$p$-adic Gibbs measure  $\m_-$ is not bounded.
\end{thm}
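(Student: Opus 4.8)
The plan is to study boundedness through the partition function recursion proved in the preceding lemma, namely $Z^{(\gh)}_{n+1}=A_{\gh}Z^{(\gh)}_n$ with $A_{\gh}=\l(0)(\t+\sum_{j=1}^\infty\hat h_j)$. Since the measure of a cylinder set is given by \eqref{mu}, and the product $\prod_{k=0}^n\l(\s(k))$ together with $\exp_p$ contributes a controlled factor (recall $|\exp_p(H_n(\s))|_p=1$ by Lemma \ref{exp}, and $|\l(i)|_p\leq 1$ under \eqref{PT}), the dominant behaviour of $\sup_{\s}|\m^{(n)}_\gh(\s)|_p$ will be governed by $1/|Z^{(\gh)}_n|_p$. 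Iterating the recursion gives $|Z^{(\gh)}_n|_p=|A_{\gh}|_p^{\,n}|Z^{(\gh)}_1|_p$ up to a fixed factor, so boundedness of $\m_\gh$ reduces to showing $|A_{\gh}|_p\geq 1$, while unboundedness reduces to $|A_{\gh}|_p<1$; then $1/|Z^{(\gh)}_n|_p$ either stays bounded or blows up geometrically as $n\to\infty$.

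Accordingly, the first step is to compute $|A_{\gh}|_p$ for each of the two solutions. For $\m_+$ we evaluate $A_{+}=\l(0)(\t+x_{+,1}+X_+)$ with $\l(0)=1$. Using $|x_{+,1}-1|_p\leq 1/p$ from \eqref{x-1}, $|X_+|_p\leq\d<1$ from Lemma \ref{inv}, and $|\t-1|_p=1/p$ from \eqref{PT3}, I expect $x_{+,1}+X_++\t=(x_{+,1}-1)+(\t-1)+X_++2$, whose $p$-adic norm equals $|2|_p=1$ since $p\geq 3$. Hence $|A_{+}|_p=1$ and the bounded factors show $\sup_{\s}|\m_+(\s)|_p$ stays bounded, giving boundedness of $\m_+$. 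For $\m_-$ the crucial input is \eqref{domin}, which asserts $|x_{-,1}+X_-+\t|_p=|\t-1|_p=1/p$; this makes $|A_{-}|_p=|\t+x_{-,1}+X_-|_p=1/p<1$, so $|Z^{(\gh)}_n|_p=p^{-n}|Z^{(\gh)}_1|_p\to 0$, and therefore $1/|Z^{(\gh)}_n|_p$ grows like $p^n$, forcing $\sup_{\s}|\m_-(\s)|_p=\infty$.

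The main obstacle I anticipate is making the reduction from $\sup_\s|\m^{(n)}_\gh(\s)|_p$ to $1/|Z^{(\gh)}_n|_p$ genuinely rigorous rather than heuristic: one must exhibit, for each $n$, a specific configuration $\s$ for which the numerator $|\exp_p\{H_n(\s)\}h_{\s(n),n}\prod_{k=0}^n\l(\s(k))|_p$ is bounded below by a constant independent of $n$ (for instance the constant configuration $\s\equiv 0$, for which the product of weights is $1$ and $h_{0,n}$ has norm $1$), so that the supremum inherits the full growth of $1/|Z^{(\gh)}_n|_p$. For the bounded case one instead needs an upper bound on the numerator that is uniform in $\s$ and $n$, which follows since $|h_{\s(n),n}|_p$ and each $|\l(\s(k))|_p$ are at most $1$ and $|\exp_p|_p=1$. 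Thus the argument splits cleanly: the $\m_-$ part needs the lower bound via a test configuration combined with $|A_-|_p=1/p$, and the $\m_+$ part needs the uniform upper bound combined with $|A_+|_p=1$. Once the norm computation of $A_\pm$ is settled using \eqref{x-1}, Lemma \ref{inv}, \eqref{domin} and $p\geq 3$, the conclusion is immediate.
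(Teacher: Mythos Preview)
Your proposal is correct and follows essentially the same route as the paper: both arguments use the recursion $Z^{(\gh)}_{n+1}=A_{\gh}Z^{(\gh)}_n$ from the preceding lemma, show $|A_+|_p=|x_{+,1}+X_++\t|_p=1$ (the paper leaves this implicit, you spell it out via the decomposition $(x_{+,1}-1)+(\t-1)+X_++2$) and $|A_-|_p=|\t-1|_p=1/p$ via \eqref{domin}, then bound the numerator uniformly for $\m_+$ and exhibit a test configuration for $\m_-$. The only cosmetic difference is your choice of test configuration $\s\equiv 0$ versus the paper's $\s\equiv 1$; both give a numerator of $p$-adic norm $1$, so either works.
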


\begin{proof} Let us first consider $\m_+$. Take any $\s\in\Om_n$.
Then from \eqref{mu} with \eqref{ZN2}, \eqref{aN} one gets
\begin{eqnarray*}
|\m_+(\s)|_p&=&\frac{1}{|Z^{(+)}_n|_p}\bigg|\exp_p\{H(\s)\}x_{+,\s(n)}\prod_{k=0}^{n-1}\l(\s(k))\bigg|_p\\
&=&\frac{1}{|(x_{+,1}+X_++\t)^{n-1}Z^{(+)}_1|_p}|x_{+,\s(n)}|_p\prod_{k=0}^{n-1}|\l(\s(k))|_p\\
&\leq &\frac{1}{|Z^{(+)}_1|_p},
\end{eqnarray*}
this means that $\m_+$ is bounded.

Now consider $\m_-$. Let us take
$$
\s^{(1)}_n=\{\s(k)=1, k\in [0,n]\}.
$$
Then analogously as above with
\eqref{PT}, \eqref{domin} and \eqref{PT3} we find
\begin{eqnarray*}
|\m_-(\s^{(1)}_n)|_p&=&\frac{1}{|Z^{(-)}_n|_p}\bigg|\exp_p\{H(\s^{(1)}_n)\}x_{-,1}\prod_{k=0}^{n-1}\l(1)\bigg|_p\\
&=&\frac{|x_{-,1}\a^n|_p}{|(x_{-,1}+X_-+\t)^{n-1}Z^{(-)}_1|_p}\\
&=&\frac{1}{|\t-1|^{n-1}|Z^{(-)}_1|_p}\\
&=&\frac{p^{n-1}}{|Z^{(-)}_1|_p},
\end{eqnarray*}
which means that $\m_-$ is not bounded.
\end{proof}

\begin{rem}
In \cite{MR1} we have proved that at $p=3$ there is two $p$-adic
Gibbs measures for that $3$-state Potts model, i.e. $|G(H)|\geq 2$.
Hence a strong phase transition occurs. There, it was shown that
those $p$-adic Gibbs measures were unbounded. Hence, Theorem
\ref{phase2} shows the difference between the finite state Potts
models, since there the $p$-adic Gibbs measures are unbounded when
the phase transition occurs.
\end{rem}

\section*{Acknowledgement}  The present study have been done within
the grant FRGS0409-109 of Malaysian Ministry of Higher Education. A
part of this work was done at the Abdus Salam International Center
for Theoretical Physics (ICTP), Trieste, Italy. The author thanks
the ICTP for providing financial support during his visit as a
Junior Associate at the centre.

%\end{document}

\end{document}